\documentclass[10pt]{article} %
\usepackage[preprint]{tmlr}
\usepackage[figuresleft]{rotating}
\usepackage{hyperref}
\usepackage{url}
\usepackage{color,soul}
\usepackage{epsf}
\usepackage{graphicx}
\usepackage{palatino}
\usepackage{amsmath,amssymb,amsthm}
\usepackage{nicefrac}
\newcommand{\newc}{\newcommand}
\newc{\E}{\mathbb{E}}
\newc{\V}{\mbox{V}}
\newc{\N}{\mbox{N}}
\newc{\Bern}{\mbox{Bern}}
\newc{\Po}{\mbox{Po}}
\newc{\IG}{\mbox{IG}}
\newc{\Gam}{\mbox{Gam}}
\newc{\bdp}{\mathbf{p}}
\newc{\bdt}{\mathbf{t}}
\newc{\Normal}{\mathcal{N}}
\newc{\Expectation}{\mathbb{E}}
\newc{\odds}{\mbox{odds}}
\newc{\stderr}{\mbox{s.e.}}
\newc{\logit}{\mbox{logit}}
\newc{\sign}{\mbox{sign}}
\newc{\SD}{\mbox{SD}}
\newc{\qtilde}{\tilde{q}}
\newc{\deltahat}{\hat{\delta}}
\newc{\shat}{\hat{s}}
\newc{\bbar}{\bar{b}}
\newc{\tauhat}{\hat{\tau}}
\newc{\bdq}{\mathbf{q}}
\newc{\bdmu}{\mbox{\boldmath $\mu$}}
\newc{\bdSigma}{\mbox{\boldmath $\Sigma$}}
\newc{\bdLambda}{\mbox{\boldmath $\Lambda$}}
\newc{\bdmuhat}{\mbox{\boldmath $\hat{\mu}$}}
\newc{\bdeta}{\mbox{\boldmath $\eta$}}
\newc{\bdtheta}{\mbox{\boldmath $\theta$}}
\newc{\bdbeta}{\mbox{\boldmath $\beta$}}
\newc{\bdgamma}{\mbox{\boldmath $\gamma$}}
\newc{\bdbetahat}{\mbox{\boldmath $\hat{\beta}$}}
\newc{\bdgammahat}{\mbox{\boldmath $\hat{\gamma}$}}
\newc{\bdthetahat}{\mbox{\boldmath $\hat{\theta}$}}
\newc{\bdvareps}{\mbox{\boldmath $\varepsilon$}}
\newc{\bdzero}{\mbox{\boldmath $0$}}
\newc{\bdone}{\mbox{\boldmath $1$}}
\newc{\bdnu}{\mbox{\boldmath $\nu$}}
\newc{\bdell}{\mbox{\boldmath $\ell$}}
\newc{\bdxi}{\mbox{\boldmath $\xi$}}
\newc{\bdomega}{\mbox{\boldmath $\omega$}}
\newc{\bdepsilon}{\mbox{\boldmath $\varepsilon$}}
\newc{\bdI}{\mathbf{I}}
\newc{\bdP}{\mbox{\boldmath $P$}}
\newc{\bdX}{\mbox{\boldmath $X$}}
\newc{\bdA}{\mbox{\boldmath $A$}}
\newc{\bdB}{\mbox{\boldmath $B$}}
\newc{\bdC}{\mbox{\boldmath $C$}}
\newc{\bdD}{\mbox{\boldmath $D$}}
\newc{\bdG}{\mbox{\boldmath $G$}}
\newc{\bdJ}{\mbox{\boldmath $J$}}
\newc{\bdK}{\mbox{\boldmath $K$}}
\newc{\bda}{\mbox{\boldmath $a$}}
\newc{\bdb}{\mbox{\boldmath $b$}}
\newc{\bdc}{\mbox{\boldmath $c$}}
\newc{\bde}{\mbox{\boldmath $e$}}
\newc{\bdu}{\mbox{\boldmath $u$}}
\newc{\bdv}{\mbox{\boldmath $v$}}
\newc{\bdx}{\mbox{\boldmath $x$}}
\newc{\bdy}{\mbox{\boldmath $y$}}
\newc{\bdz}{\mbox{\boldmath $z$}}
\newc{\bdr}{\mbox{\boldmath $r$}}
\newc{\bdQ}{\mbox{\boldmath $Q$}}
\newc{\bdR}{\mbox{\boldmath $R$}}
\newc{\bdY}{\mbox{\boldmath $Y$}}
\newc{\bdT}{\mbox{\boldmath $T$}}
\newc{\bdW}{\mbox{\boldmath $W$}}
\newc{\bdH}{\mbox{\boldmath $H$}}
\newc{\bdL}{\mbox{\boldmath $L$}}
\newc{\bdU}{\mbox{\boldmath $U$}}
\newc{\bdV}{\mbox{\boldmath $V$}}
\newc{\Multinom}{\mbox{Multinom}}
\newc{\tbar}{\bar{t}}
\newc{\Var}{\mbox{Var}}
\newc{\var}{\mbox{var}}
\newc{\diag}{\mbox{diag}}
\newc{\tr}{\mbox{tr}}
\newc{\phat}{\hat{p}}
\newc{\Xbar}{\bar{X}}
\newc{\xbar}{\bar{x}}
\newc{\Ybar}{\bar{Y}}
\newc{\ybar}{\bar{y}}
\newc{\dbar}{\bar{d}}
\newc{\yhat}{\hat{y}}
\newc{\bdyhat}{\mbox{\boldmath $\hat{y}$}}
\newc{\ytil}{\tilde{y}}
\newc{\ytilde}{\tilde{y}}
\newc{\ftil}{\tilde{f}}
\newc{\Ho}{\mbox{\bf H}_o}
\newc{\Ha}{\mbox{\bf H}_a}
\newc{\phatYX}{\phat_Y - \phat_X}
\newc{\SSG}{\mbox{SSG}}
\newc{\SSB}{\mbox{SSB}}
\newc{\SSE}{\mbox{SSE}}
\newc{\SST}{\mbox{SST}}
\newc{\SSR}{\mbox{SSR}}
\newc{\SSAB}{\mbox{SSAB}}
\newc{\MSG}{\mbox{MSG}}
\newc{\MSB}{\mbox{MSB}}
\newc{\MSE}{\mbox{MSE}}
\newc{\MST}{\mbox{MST}}
\newc{\MSAB}{\mbox{MSAB}}
\newc{\dfE}{\mbox{dfE}}
\newc{\dfG}{\mbox{dfG}}
\newc{\dfB}{\mbox{dfB}}
\newc{\dfT}{\mbox{dfT}}
\newc{\dfAB}{\mbox{dfAB}}
\newc{\muhat}{\hat{\mu}}
\newc{\betahat}{\hat{\beta}}
\newc{\alphahat}{\hat{\alpha}}
\newc{\etahat}{\hat{\eta}}
\newc{\phihat}{\hat{\phi}}
\newc{\sigmahat}{\hat{\sigma}}
\newc{\cl}{\centerline}
\newc{\redtitle}[1]{ {\color{red}\und{#1}:} }
\newc{\bluetitle}[1]{ {\color{blue}\und{#1}:} }
\newc{\magentatitle}[1]{ {\color{magenta}\und{#1}:} }
\newc{\R}{\mathbb{R}}
\newc{\trans}{^\mathsf{T}}
\newc{\xtx}{\bdX\trans\bdX}
\newc{\xxtxx}{\bdX(\xtx)^{-1}\bdX\trans}
\newc{\argmin}{\operatornamewithlimits{argmin}}
\newc{\argmax}{\operatornamewithlimits{argmax}}
\newc{\setS}{\mathcal{S}}
\newc{\toP}{\overset{p}{\to}}
\newc{\toD}{\overset{d}{\to}}
\newc{\eqD}{\overset{d}{=}}
\newc{\toAS}{\overset{a.s.}{\to}}
\newc{\simIID}{\overset{\text{i.i.d.}}{\sim}}
\newc{\simIND}{\overset{\text{ind.}}{\sim}}
\newc{\thetahat}{\hat{\theta}}
\newc{\bds}{\mathbf{s}}
\newc{\thetatilde}{\tilde{\theta}}
\newc{\Phat}{\hat{P}}
\newc{\Qhat}{\hat{Q}}
\newc{\tny}{\small}
\newc{\Col}{\textnormal{Col}}
\newc{\mutilde}{\tilde{\mu}}
\newc{\vecst}{\textnormal{vec}}
\newc{\indep}{\perp \!\!\! \perp}
\newc{\Cov}{\textnormal{Cov}}
\newc{\gtilde}{\tilde{g}}
\newc{\rank}{\textnormal{rank}}
\newc{\Cor}{\textnormal{Cor}}
\newc{\Lcol}{\mathcal{L}_{col}}
\newc{\Lrow}{\mathcal{L}_{row}}
\newc{\Ell}{\mathcal{L}}
\newc{\Norm}{\mathcal{N}}
\newc{\xtilde}{\tilde{x}}
\newc{\Xtilde}{\tilde{X}}
\newc{\Atilde}{\tilde{A}}
\newc{\Xhat}{\hat{X}}
\newc{\Vhat}{\hat{V}}
\newc{\Vperp}{V^\perp}
\newc{\Rhat}{\hat{R}}
\newc{\Shat}{\hat{S}}
\newc{\uhat}{\hat{u}}
\newc{\hhat}{\hat{h}}
\newc{\Yhat}{\hat{Y}}
\newc{\bpage}{\eject}
\newc{\bdg}{\mathbf{g}}
\newc{\bdZ}{\mathbf{Z}}
\newc{\Rn}{\mathbb{R}^n}
\newc{\Sigmahat}{\hat{\Sigma}}
\newc{\Ysubj}{Y_{(j)}}
\newc{\Rsubj}{R_{(j)}}
\newc{\A}{\mathcal{A}}
\newc{\B}{\mathcal{B}}
\newc{\Gammasubj}{\Gamma_{(j)}}
\newc{\Btilde}{\tilde{\mathcal{B}}}
\newc{\Xitilde}{\tilde{\Xi}}
\newc{\xitilde}{\tilde{\xi}}
\newc{\Gammatilde}{\tilde{\Gamma}}
\newc{\gammatilde}{\tilde{\gamma}}
\newc{\Lambdatilde}{\tilde{\Lambda}}
\newc{\Ls}{\mathcal{L}}
\newc{\Lambdasubj}{\Lambda_{(j)}}
\newc{\Msubj}{M_{(j)}}
\newc{\Lsubj}{L_{(j)}}
\newc{\Ytilde}{\tilde{Y}}
\newc{\Ytildehat}{\hat{\tilde{Y}}}
\newc{\xtildebar}{\tilde{\xbar}}
\newc{\F}{\mathcal{F}}
\newc{\M}{\mathcal{M}}
\newc{\G}{\mathcal{G}}
\newc{\Dtilde}{\tilde{D}}
\newc{\Utilde}{\tilde{U}}
\newc{\Vtilde}{\tilde{V}}
\newc{\betatilde}{\tilde{\beta}}
\newc{\fhat}{\hat{f}}
\newc{\Rtilde}{\tilde{R}}
\newc{\Qtilde}{\tilde{Q}}
\newc{\gammahat}{\hat{\gamma}}
\newc{\X}{\mathcal{X}}
\newc{\Y}{\mathcal{Y}}
\newc{\abar}{\bar{a}}
\newc{\phitilde}{\tilde{\phi}}
\newc{\ahat}{\hat{a}}
\newc{\bhat}{\hat{b}}
\newc{\bdf}{\mathbf{f}}
\newc{\bdh}{\mathbf{h}}
\newc{\Pois}{\text{Pois}}
\newc{\bdlambda}{\boldsymbol{\lambda}}
\newc{\Mult}{\text{Mult}}
\newc{\liminfc}{\underline{\lim}}
\newc{\limsupc}{\overline{\lim}}
\newc{\pbar}{\overline{p}}
\newc{\bdalpha}{\boldsymbol{\alpha}}
\newc{\psibar}{\overline{\psi}}
\newc{\bdxbar}{\overline{\mathbf{x}}}
\newc{\bdd}{\mathbf{d}}
\newc{\Wtilde}{\tilde{W}}
\newc{\bdZtilde}{\tilde{\mathbf{Z}}}
\newc{\ttilde}{\tilde{t}}
\newc{\Ztilde}{\tilde{Z}}
\newc{\pihat}{\hat{\pi}}
\newc{\betahatridge}{\betahat^{\lambda}_{\text{RR}}}
\newc{\epsilonhat}{\hat{\epsilon}}
\newc{\zhat}{\hat{z}}
\newc{\qhat}{\hat{q}}
\newc{\Bhat}{\hat{B}}

\newc{\hh}{\mathcal{H}}
\newc{\Unif}{\text{Unif}}
\newc{\psihat}{\hat{\psi}}
\newc{\Expo}{\text{Expo}}
\newc{\bdxtilde}{\tilde{\mathbf{x}}}
\newc{\toDPn}{\overset{P_n}{\rightsquigarrow}}
\newc{\toDQn}{\overset{Q_n}{\rightsquigarrow}}
\newc{\Fhat}{\hat{F}}
\newc{\bdO}{\mathbf{O}}
\newc{\bdn}{\mathbf{n}}
\newc{\Uhat}{\hat{U}}
\usepackage{float}
\usepackage{booktabs}
\newtheorem{theorem}{Theorem}
\newtheorem{assumption}[theorem]{Assumption}
\usepackage[section]{placeins}

\newcommand{\Ahat}{\hat{A}}
\newcommand{\Chat}{\hat{C}}
\newtheorem{proposition}[theorem]{Proposition}

\title{Pro-Active Inference}

\author{\name Skyler Wu \email skylerw@stanford.edu \\
      \addr Department of Statistics\\Stanford University\\
      \AND
      \name Yash Nair \email yashnair@stanford.edu \\
      \addr Department of Statistics\\Stanford University\\
      \AND
      \name Antoine Maechler \email maechler@stanford.edu\\
      \addr Department of Management Science and Engineering\\Stanford University
      \AND
      \name Emmanuel J. Candès \email candes@stanford.edu\\
      \addr Department of Statistics\\Stanford University
}

\def\month{10}  %
\def\year{2026} %
\def\openreview{\url{https://openreview.net/forum?id=XXXX}} %

\begin{document}

\maketitle

\begin{abstract}
Estimating a population mean is fundamental across scientific applications, but observing every outcome can be prohibitively expensive. Often, covariates and machine learning (ML) predictions are available for the full population, even when only a small subset of outcomes can be observed. How can we leverage this information to proactively select observations to label, update ML predictions as labels arrive, and construct statistically valid confidence intervals? We develop a general, without-replacement formulation of Pro-Active Inference (PAI), extending our earlier with-replacement approach and building on Active Statistical Inference. Rather than making fixed-order label-skip decisions, PAI proactively selects which observation to label next from the remaining covariate pool while accommodating online ML updates. We establish PAI's finite-sample unbiasedness and asymptotic validity, develop an improved finite-sample variance estimator, and extend PAI to superpopulation inference without Bonferroni corrections or $\alpha$-splitting. We characterize PAI's potential efficiency gains over Batch Active Inference in a quasi-oracle setting, highlighting the role of prediction-residual imbalance. Across eight diverse datasets, PAI achieves greater sampling efficiency than Sequential Active Inference in the vast majority of tested settings with near-nominal statistical coverage, using the same ML predictor and uncertainty score. However, comparisons with simpler baselines such as Prediction-Powered Inference and uniform sampling, as well as the closely related Prediction-Powered Active Testing, reveal when additional adaptivity helps or hurts statistical efficiency. Through interpretable, theory-motivated diagnostics and frozen-ML ablations, we investigate when and why PAI gains or loses efficiency, highlighting the initial residual-estimation problem and subsequent online learning dynamics.
\end{abstract}

\newpage

\section{Introduction}
\label{sec:intro}

Estimating a population mean---such as disease prevalence, the proportion of households below the poverty line, or the fraction of files containing malware---is fundamental across scientific and policy applications. However, collecting ground-truth outcomes can be prohibitively expensive, often requiring human expert annotation, specialized testing, or additional surveys. In many modern settings, however, covariates and machine learning (ML) predictions are readily available for the \textit{entire population}, even when ground-truth outcomes can only be obtained for a small subset. How can we fully leverage this information to decide which outcomes to collect and efficiently estimate a population mean, with valid confidence intervals?

As a concrete example, consider a policymaker seeking to estimate the proportion of households living below the poverty line following a major economic disruption. Suppose demographic information is available for many households, but determining their current economic circumstances requires costly surveys. An ML model trained on historical data could predict each household's probability of poverty, although its predictions may be imperfect, particularly under changing economic conditions. Given a limited survey budget, the policymaker could survey households uniformly at random, obtaining a statistically valid sample mean but potentially leaving efficiency gains from ML predictions untapped. Alternatively, they could preferentially survey households whose poverty status the model is most uncertain about, update the model as new responses arrive, and adapt subsequent sampling accordingly. However, simply averaging the adaptively collected responses can introduce selection bias, while relying on ML predictions alone risks bias from prediction errors.

The challenge, therefore, is to leverage the full covariate pool and progressively acquired responses to construct efficient estimators and confidence intervals under adaptive, nonuniform sampling.

\paragraph{Problem setup} Formally, suppose we observe covariates ${x_1, \ldots, x_N}$ with corresponding unobserved responses ${y_1, \ldots, y_N}$, each of which may be collected at a (potentially substantial) cost. Given a labeling budget $n \ll N$, we seek to sequentially label data points, so as to construct an estimator and confidence interval, by leveraging, at each timestep, the full covariate pool, previously observed responses, and a black-box ML model that can be updated as new responses arrive. Specifically, we aim to use ML predictions both to adaptively select which responses to collect and to improve the resulting estimator through regression adjustment. Our primary goal is to construct an efficient estimator and statistically valid confidence interval for the finite-population mean 
\begin{align}
    \label{eqn:finite-pop-estimand}
    \theta := \frac{1}{N} \sum_{i=1}^N y_i,
\end{align}
treating the population as fixed and the sampling process alone as random. We also consider inference for the superpopulation mean $\mathbb{E}[Y]$ when the $(x_i,y_i)$ are assumed to have been drawn i.i.d.~from an underlying distribution.

\paragraph{Positioning} A natural starting point for the above problem is Active Statistical Inference (ASI) \citep{zrnic2024active}, which leverages ML predictions to guide label acquisition and construct statistically valid estimators and confidence intervals. In its sequential formulation (SAI), observations are processed in a fixed, predetermined order, with the ML model potentially updated as labels are acquired. Crucially, at round $t$, SAI can only make a Bernoulli decision to either acquire or skip the response $y_t$ associated with the current observation $x_t$. Once an observation is skipped, its response cannot be acquired later. When the full covariate pool is available in advance, however, why restrict each round to a label-skip decision about the current observation, rather than proactively selecting which observation to label next from the entire remaining pool?

Motivated by this question, our earlier work \citep{wu2026efficient} introduced Pro-Active Inference (PAI) in the context of LLM evaluation, combining proactive acquisition with online-updated ML predictions and regression adjustment. However, this initial formulation sampled with replacement, potentially revisiting previously labeled observations and wasting sampling opportunities. Extending PAI to sampling without replacement, however, requires more than modifying the acquisition policy: the estimator and its underlying martingale-based analysis must also be reformulated, as naive modifications can violate conditional unbiasedness and lead to severe confidence-interval miscoverage.

\paragraph{Our contributions} In the present work, we therefore develop a general, without-replacement formulation of PAI, together with a reformulated estimator and corresponding martingale-based inferential guarantees.

We establish its finite-sample unbiasedness and asymptotic validity, develop an improved finite-sample variance estimator, and extend PAI to superpopulation inference without requiring Bonferroni corrections or $\alpha$-splitting. Theoretically, we characterize PAI's potential efficiency gains over Batch Active Inference (BAI) in a quasi-oracle setting, highlighting the role of prediction-residual imbalance. Empirically, through experiments across eight diverse testbeds, we investigate PAI's statistical efficiency and confidence-interval coverage relative to existing methods. Beyond comparing performance, we develop interpretable diagnostics of residual imbalance, the relative difficulty of direct versus prediction-adjusted estimation, and the quality of online ML updates, together with targeted ablations, to understand when and why PAI gains or loses efficiency, and, more broadly, when ML-assisted statistical inference is beneficial.\footnote{All source code and analysis routines for reproducing all results in this paper can be found at
\url{https://github.com/skbwu/pro-active-inference}.}

Towards the completion of this manuscript, we also became aware of the closely related works of \citet{ashouritaklimi2026prediction,hamilton2026active}, which develop alternative approaches to adaptive sampling without replacement. While these approaches share important methodological features with PAI, they differ in their estimator constructions and uses of ML predictions. We discuss these and other related approaches in Section \ref{sec:related-work} and empirically compare PAI with the former in Section \ref{sec:results}.

\section{Related work}
\label{sec:related-work}

\paragraph{Prediction-powered inference} A natural starting point for estimating a population mean is uniform sampling, which yields statistically valid inference without involving an ML model, but may leave substantial efficiency gains from available ML predictions untapped. Prediction-Powered Inference (PPI; \citet{angelopoulos2023prediction}) leverages ML predictions to impute unobserved outcomes and corrects the resulting estimation error using a smaller sample of observed labels. Subsequent extensions include PPI++ \citep{angelopoulos2026ppi++}, which adapts the strength of the prediction-based adjustment to improve statistical efficiency, and cross-PPI \citep{zrnic2024cross}, which allows the available labeled data to be used for both training ML predictors and constructing inference through cross-fitting. While these approaches primarily consider nonadaptively acquired labels, PAI additionally uses ML predictions to guide which outcomes to acquire, potentially updating both its regression adjustment and acquisition policy as new labels are observed.

\paragraph{Active statistical inference} Our work builds on the Active Statistical Inference (ASI) framework of \citet{zrnic2024active}, which leverages ML predictions to guide label acquisition while constructing statistically valid estimators and confidence intervals for statistical estimands. In its sequential formulation (Sequential Active Inference; SAI), observations arrive in a stream, with covariates modeled as i.i.d.\ draws from a superpopulation, and the method makes Bernoulli label-skip decisions according to a potentially updated ML-based acquisition policy, subject to a prespecified labeling budget. Once an observation is skipped, however, its label cannot be acquired later. When the full covariate pool is available in advance, PAI instead proactively selects which observation to label next from the remaining unlabeled pool, using the full covariate information and previously acquired labels to update its ML-based regression adjustment and acquisition policy. Related developments include optimized and robust acquisition policies for ASI \citep{ma2026optimized,li2026robust}, as well as applications to human annotation and evaluation of LLM outputs \citep{gligoric2025can,angelopoulos2025cost}.

In our earlier work \citep{wu2026efficient}, we introduced a with-replacement formulation of PAI for LLM evaluation, using sampling policies and regression adjustments based on an online-updated working Bayesian factor model. However, this formulation allowed previously labeled observations to be sampled again, potentially wasting part of the labeling budget on repeated queries. The present work instead samples exclusively from the remaining unlabeled observations, ensuring that each query acquires a new label and that the full budget is spent on distinct observations, while developing the corresponding estimator and inferential guarantees. Finally, \citet{tolochinsky2026valid} also consider sampling without replacement in the context of best-model identification for LLMs, but use uniform sampling rather than an adaptive, ML-guided acquisition policy.

\paragraph{Active testing and measurement} PAI is also closely connected to a growing literature on efficient statistical inference in the presence of missing outcome data, particularly when labels are acquired adaptively. Most closely related is a line of work studying importance-weighted estimation under adaptive label acquisition \citep{farquhar2021statistical,kossen2021active,hamilton2026active,ashouritaklimi2026prediction}. With the exception of \citet{hamilton2026active}, these works focus primarily on estimating the risk of a fixed ML model, but their core methods readily apply to our population mean estimation task outlined in Section~\ref{sec:intro}. In particular, the active-testing literature \citep{farquhar2021statistical,kossen2021active} develops unbiased estimators under adaptive sampling without replacement.

Building on this literature, one of the methods most closely related to PAI is Active Measurement (AM; \citealp{hamilton2026active}), which studies the estimation of finite-population quantities, such as bird counts, rather than the prediction risk of a fixed ML model. AM uses online-updated ML predictions to guide acquisition and develops alternative weighting schemes and variance estimators for combining importance-weighted estimates. In particular, AM generalizes the estimation framework underlying \citet{farquhar2021statistical}'s LURE estimator by considering alternative schemes for aggregating conditionally unbiased per-round estimates. Indeed, at the level of the per-round estimator, AM coincides with PAI when our prediction-based regression adjustment is omitted. PAI instead incorporates online-updated ML predictions into both acquisition and regression adjustment, potentially improving sampling efficiency while preserving finite-sample unbiasedness.

Another closely related method is Prediction-Powered Active Testing (PPAT; \citealp{ashouritaklimi2026prediction}), which similarly builds on the active-testing literature, but additionally incorporates ML predictions directly into the estimator through a prediction-powered control variate. In all of their experiments, PPAT uses a fixed ML proxy for regression adjustment and a potentially updated surrogate model to guide acquisition, with a tunable control-variate coefficient. The authors further note that the control-variate proxy could also be updated as new labels are acquired, although they do not investigate this extension experimentally. In contrast to their experimental implementation, PAI directly incorporates online-updated ML predictions into its imputation and residual correction.

Like PAI, AM and PPAT construct their estimators by aggregating per-round importance-weighted estimates, which are conditionally unbiased under fixed tuning parameters. Under suitable regularity conditions, both approaches establish asymptotic Gaussian limiting distributions through martingale arguments, enabling the construction of asymptotically valid confidence intervals using corresponding variance estimators. However, PPAT aggregates its per-round estimates using nonuniform LURE-based weights \citep{farquhar2021statistical}, whereas PAI weights them equally.

Among these closely related approaches, we focus our empirical comparisons on PPAT, since its use of ML predictions in both acquisition and regression-adjusted estimation makes it a particularly direct methodological comparator to PAI.

\section{Pro-Active Inference methodology}
\label{sec:pai-methods}

We first describe PAI's adaptive sampling procedure, then develop estimators and confidence intervals for finite-population means, before extending our methodology to superpopulation inference.

\subsection{Sampling setup and notation}
\label{sec:sampling-setup-and-notation}

Let $\mathcal{I} = (1, \ldots, N)$ index the finite population, and let $O_{t-1} \subseteq \mathcal{I}$ denote the indices with observed responses before round $t$, with $O_0 = \emptyset$. Given a labeling budget $n \ll N$, at each round $t=1, \ldots, n$, PAI constructs an acquisition distribution (or \textit{sampling policy}) $q_t$ over the remaining unlabeled indices $R_{t-1} := \mathcal{I} \setminus O_{t-1}$, using the full covariate pool and all previously acquired responses. It then samples an index $I_t \sim q_t$, observes its response $y_{I_t}$, and updates $O_t := O_{t-1} \cup \{ I_t \}$. We denote by $\fhat^{(t-1)}$ the ML predictor available before round $t$, initialized at $\fhat^{(0)}$ and potentially updated after each newly acquired response. Formally, let $\F_{t-1} := \sigma(\{I_s, y_{I_s}\}_{s=1}^{t-1})$. We require $\fhat^{(t-1)}$ and $q_t$ to be $\F_{t-1}$-measurable: depending only on past queries/labels.

The sampling policy $q_t$ may depend on both the full covariate pool and previously acquired labels, but not on the unobserved responses. A natural strategy is to assign higher sampling probabilities to observations where the ML predictor is most uncertain. For binary outcomes, writing $p_t(x) := \fhat^{(t-1)}(x)$, common uncertainty scores include $u_t(x) = 2\min(p_t(x), 1-p_t(x))$ \citep{zrnic2024active} and $u_t(x) = \sqrt{p_t(x) (1 - p_t(x))}$, the latter motivated by Neyman allocation. Our experiments use the former, with uniform mixing to ensure positive sampling probabilities (Section \ref{sec:experimental-setup}), although our theoretical framework supports general adaptive sampling policies satisfying the regularity conditions below.

\subsection{Pro-Active Inference estimator for finite-population estimands}
\label{sec:finite-pop-pai-estimator}

After adaptively collecting $n$ labels without replacement, we propose the following Pro-Active Inference (PAI) estimator of the finite-population mean:
\begin{align}
\label{eqn:pai-estimator-for-mean}
    \thetahat_{n} := \frac{1}{n}\sum_{t=1}^n \phi_t, \quad \phi_t := \frac{1}{N}\left( \sum_{i \in O_{t-1}} y_i + \sum_{i \notin O_{t-1}} \hat{f}^{(t-1)}(x_{i}) \right) + \frac{1}{N}\frac{y_{I_t} - \hat{f}^{(t-1)}(x_{I_t})}{q_t(I_t)}.
\end{align}
Intuitively, the first term in $\phi_t$ represents our current plug-in estimate of $\theta$---using observed labels where available and imputing ML predictions for the remaining unobserved labels---while the second term provides an importance-weighted bias correction for prediction errors. In Theorem \ref{thm:pai-asymptotic-distribution}, we show that $\thetahat_n$ is unbiased for $\theta$ at every finite $n$ and derive its asymptotic distribution using a martingale central limit theorem under mild regularity conditions.

\begin{theorem}
\label{thm:pai-asymptotic-distribution}
    Consider a triangular array of problems (indexed by $m$) such that $n_m \uparrow \infty$, $N_m \uparrow \infty$, and $n_m/N_m \to \rho \in [0, 1 - \epsilon]$ for some $\epsilon > 0$.\footnote{We restrict the sampling fraction to be bounded away from $1$ to focus on the non-census regime. For readability, we omit the subscripts $m$ henceforth.} For each $t$, suppose that (a) $\fhat^{(t-1)}(\cdot)$ and $q_t(\cdot)$ are $\F_{t-1}$-measurable, and that (b) conditional on $\F_{t-1}$, $q_t(\cdot)$ is a probability distribution on $R_{t-1} := \mathcal{I} \setminus O_{t-1}$ with $q_t(i) \geq \tau / |R_{t-1}| = \tau / (N - t + 1)$ for all $i \in \R_{t-1}$, for some fixed $\tau \in (0, 1]$.

    Then, $\thetahat_n$ is unbiased for $\theta$ at every finite $n$. Furthermore, under mild regularity conditions (variance stabilization, Lindeberg, and uniform fourth-moment control for weighted residuals; Assumptions \ref{a1:variance_stabilization}-\ref{a3:4th-moment-control} in Appendix \ref{regularity-assumptions:pai-thm}), $\thetahat_n$ obeys the following limiting distribution:
    \begin{align}
        \sqrt{n}\left( \thetahat_n - \theta \right) / \sigmahat_{n} \toD \mathcal{N}\left( 0, 1\right).
    \end{align}
    And, the variance estimator is given by the following, where $\thetahat_{t-1} := \frac{1}{t-1} \sum_{s=1}^{t-1} \phi_s$ for $t \geq 2$:
    \begin{align}
        \label{eqn:og-var-estimator}
        \sigmahat_n^2 &:= \hat{A}_n - \Bhat_n\\
        &:= \frac{1}{nN^2} \sum_{t=1}^n \frac{\left( y_{I_t} - \hat{f}^{(t-1)}\left( x_{I_t} \right) \right)^2}{q_t(I_t)^2} - \frac{1}{nN^2} \sum_{t=2}^n \left( N\thetahat_{t-1} - \sum_{i \in O_{t-1}} y_i - \sum_{i \notin O_{t-1}} \hat{f}^{(t-1)}(x_i) \right)^2.
    \end{align}
\end{theorem}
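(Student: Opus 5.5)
The plan is to write $\thetahat_n - \theta$ as an average of martingale differences with respect to $(\F_t)$, get unbiasedness directly from that, and then apply a martingale CLT together with a consistency argument for $\sigmahat_n^2$.

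\emph{Step 1: the martingale difference representation.} Define the current plug-in estimate of $N\theta$ by
\[
P_{t-1} := \sum_{i \in O_{t-1}} y_i + \sum_{i \notin O_{t-1}} \fhat^{(t-1)}(x_i),
\]
which is $\F_{t-1}$-measurable by assumption (a). Write $r_{t,i} := y_i - \fhat^{(t-1)}(x_i)$ for the current residuals. Since $q_t$ is an $\F_{t-1}$-measurable distribution on $R_{t-1}$, we have
\[
\E\left[ r_{t,I_t}/q_t(I_t) \mid \F_{t-1} \right] = \sum_{i \in R_{t-1}} r_{t,i}.
\]
Substituting this into $\phi_t$ gives
\[
\E[\phi_t \mid \F_{t-1}] = \tfrac{1}{N}\Bigl(\sum_{i \in O_{t-1}} y_i + \sum_{i \in R_{t-1}} y_i\Bigr) = \theta.
\]
Hence $D_t := \phi_t - \theta$ is a martingale difference sequence, and $\thetahat_n - \theta = \frac1n \sum_{t=1}^n D_t$. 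Taking expectations gives finite-sample unbiasedness for every $n$. The lower bound $q_t(i) \ge \tau/(N-t+1)$ guarantees that all the quantities involved are finite.

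\emph{Step 2: the conditional variance.} Because $N\phi_t = P_{t-1} + r_{t,I_t}/q_t(I_t)$ and $P_{t-1}$ is $\F_{t-1}$-measurable,
\[
N^2 \Var(\phi_t \mid \F_{t-1}) = \E\left[ r_{t,I_t}^2/q_t(I_t)^2 \mid \F_{t-1} \right] - (N\theta - P_{t-1})^2 =: N^2 v_t .
\]

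\emph{Step 3: the CLT.} I would invoke a martingale CLT for triangular arrays (Hall--Heyde), applied to $\frac{1}{\sqrt n}\sum_t D_t$. Its two hypotheses are exactly the regularity assumptions:
\begin{itemize}
\item variance stabilization: $\frac1n\sum_t v_t - \sigma_n^2 \toP 0$ for a deterministic sequence $\sigma_n^2$ bounded away from zero;
\item a conditional Lindeberg condition on $D_t/\sqrt n$.
\end{itemize}
These give $\sqrt n(\thetahat_n-\theta)/\sigma_n \toD \Normal(0,1)$, and Slutsky's lemma then reduces the result to showing $\sigmahat_n^2/\sigma_n^2 \toP 1$.

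\emph{Step 4: consistency of $\sigmahat_n^2$, which I expect to be the main obstacle.} Compare $\sigmahat_n^2$ with $\frac1n\sum_t v_t$ one term at a time.
\begin{itemize}
\item \emph{The $\Ahat_n$ term.} Set $\xi_t := r_{t,I_t}^2/(N^2 q_t(I_t)^2)$. Then $\Ahat_n - \frac1n \sum_t \E[\xi_t\mid\F_{t-1}] = \frac1n\sum_t(\xi_t - \E[\xi_t\mid\F_{t-1}])$ is an average of martingale differences. Its second moment is at most $\frac{1}{n^2}\sum_t \E[\xi_t^2]$, which tends to zero by the uniform fourth-moment control on the weighted residuals $r_{t,I_t}/(Nq_t(I_t))$.
\item \emph{The $\Bhat_n$ term.} This is the delicate part. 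It must match $\frac1{nN^2}\sum_t (N\theta - P_{t-1})^2$, but it replaces $\theta$ by the running mean $\thetahat_{t-1}$. Expanding
\[
(N\thetahat_{t-1} - P_{t-1})^2 = (N\theta - P_{t-1})^2 + 2N(\thetahat_{t-1}-\theta)(N\theta-P_{t-1}) + N^2(\thetahat_{t-1}-\theta)^2,
\]
the target is to show that the second and third terms, averaged over $t$, are $o_P(1)$. By Steps 1--2, $\E[(\thetahat_{t-1}-\theta)^2] = \frac{1}{(t-1)^2}\sum_{s<t}\E[v_s] = O(1/t)$, using the moment bounds. Averaging over $t \le n$ gives $O(\log n/n) \to 0$ for the third term. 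The cross term is then handled by Cauchy--Schwarz, provided $\frac{1}{nN^2}\sum_t (N\theta-P_{t-1})^2 = O_P(1)$. That follows because this quantity is dominated by $\frac1n\sum_t \E[\xi_t\mid\F_{t-1}]$, via $v_t\ge 0$.
\end{itemize}
The technical care lies in making these bounds uniform along the triangular array with $n/N\to\rho<1$. Combining the two pieces gives $\sigmahat_n^2 - \frac1n\sum_t v_t \toP 0$, hence $\sigmahat_n^2/\sigma_n^2\toP 1$, and Slutsky's lemma completes the proof.
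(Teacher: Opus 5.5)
Your proposal is correct and follows essentially the same route as the paper. Both arguments use the martingale-difference representation of $\phi_t-\theta$ and Hall--Heyde's CLT under the variance-stabilization and Lindeberg assumptions. Both control $\hat{A}_n-A_n$ by a martingale second-moment bound together with the fourth-moment assumption. Both split $\hat{B}_n-B_n$ into an $O(\log n/n)$ curvature term plus a cross term handled by Cauchy--Schwarz. The paper applies Cauchy--Schwarz termwise in expectation rather than through your $O_P(1)$ argument, but both rest on the same fact $v_t\ge 0$.

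The only bookkeeping you skip is the $t=1$ term of $B_n$, which has no counterpart in $\hat{B}_n$ but is at most $C^{1/2}/n$. Also, no extra care is needed for uniformity along the array: Assumption~\ref{a3:4th-moment-control} already supplies a uniform constant, and the condition $n/N\to\rho$ plays no role in these bounds.
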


\begin{proof}
    Please see Appendix \ref{proof:pai-thm}.
\end{proof}

By Theorem \ref{thm:pai-asymptotic-distribution}, it follows that an asymptotic $(1-\alpha)$-level confidence interval for $\theta$ is given by
\begin{align}
    \label{eqn:finite-pop-interval}
    \left[ \thetahat_n \pm z_{1-\alpha/2} \left( \sigmahat_n / \sqrt{n} \right) \right].
\end{align}

\subsection{An improved finite-sample variance estimator}
\label{sec:bias-corrected-variance-estimator}

Although $\sigmahat^2_n := \Ahat_n - \Bhat_n$ is consistent for the limiting variance $\sigma^2$ governing the martingale CLT in Theorem \ref{thm:pai-asymptotic-distribution} (see proof in Appendix \ref{proof:pai-thm}), empirically, it can occasionally underestimate variance at finite labeling budgets, yielding confidence intervals that are too narrow and coverage below the nominal $1-\alpha$ level.

Inspired by the higher-order bias-correction philosophy of the ABC and BCa bootstrap methods (c.f.~\cite{efron1987better,diciccio1992more}), we derive an additive correction to $\sigmahat_n^2$ using a second-order Taylor expansion. Specifically, we examine the finite-sample error introduced by substituting $\thetahat_{t-1}$ for the unknown population mean $\theta$ in our variance estimator (Equation \ref{eqn:og-var-estimator}). This motivates the following bias-corrected variance estimator $\sigmahat^2_{n,\text{cor}}$, where $\Ahat_n$ and $\Bhat_n$ are defined above in Theorem \ref{thm:pai-asymptotic-distribution}:
\begin{align*}
    \sigmahat^2_{n,\text{cor}} := \Ahat_n - \Bhat_n + \Chat_n, \quad \text{with } \Chat_n := \frac{1}{n}\sum_{s=1}^{n-1} \left( \frac{1}{N} \frac{y_{I_s} - \fhat^{(s-1)}(x_{I_s})}{q_s(I_s)} \right)^2 w_s \quad \text{and} \quad w_s := \sum_{k=s}^{n-1}\frac{1}{k^2}.
\end{align*}
Importantly, the correction $\Chat_n$ is nonnegative and vanishes asymptotically, preserving the asymptotic normality guarantee of Theorem \ref{thm:pai-asymptotic-distribution}. We provide the full derivation, proof of preserved asymptotic validity, and a characterization of when the correction reduces mean squared error in Appendix \ref{appendix:bias-corrected-variance-estimator}. We use the bias-corrected variance estimator $\sigmahat^2_{n,\text{cor}}$ in all our experiments and recommend it as the default for practitioners.

\subsection{Pro-Active Inference estimator for superpopulation estimands}
\label{sec:superpop-pai-estimator}

In the preceding subsections, we focused on the finite-population mean $\theta := N^{-1} \sum_{i=1}^N y_i$ as our estimand, treating $\{ (x_i, y_i )\}_{i=1}^N$ as a fixed finite population. 

However, oftentimes scientists may instead wish to adaptively sample from a finite population without replacement while making inference about the larger superpopulation from which the finite population was drawn. For example, following a natural disaster, policymakers may survey a subset of households from a representative sample of residents to assess their economic circumstances, while seeking to draw statistically principled conclusions about the broader population of affected households. In this subsection, we extend PAI to accommodate such superpopulation estimands.

Formally, we assume a superpopulation model with $(x_i, y_i) \simIID F$, with mean $\theta := \mathbb{E}[y_i]$ and variance $\tau^2 := \Var(y_i)$. Our goal is to construct a confidence interval for the superpopulation mean $\theta$. For this subsection only, define $\theta_N := N^{-1} \sum_{i=1}^N y_i$ to be the size-$N$ finite-population mean of $y_1, \ldots, y_N$, where the $y_i$ are now treated as random variables. 

Let $\thetahat_n$ be our PAI estimator for $\theta_N$ from Equation \ref{eqn:pai-estimator-for-mean}. The critical observation is that
\begin{align}
    \label{eqn:superpopulation-error-decomposition}
    \sqrt{n}\left( \thetahat_n - \theta \right) = \underbrace{\sqrt{n}(\thetahat_n - \theta_N)}_{\text{finite-population sampling error}} + \underbrace{\frac{\sqrt{n}}{\sqrt{N}} \sqrt{N}(\theta_N - \theta)}_{\text{superpopulation error}}.
\end{align}
Intuitively, this decomposition separates the uncertainty from observing only a subset of the finite population from the additional uncertainty arising because the finite population itself is a sample from a larger superpopulation. Using this decomposition, we obtain the following limiting distribution.

\begin{proposition}
    \label{prop:superpopulation-asymptotic-distribution}
    Retain the setup and assumptions of Theorem \ref{thm:pai-asymptotic-distribution}, where $\sigma^2$ denotes the limiting variance governing PAI's finite-population martingale CLT. Further suppose that $(x_i, y_i) \simIID F$ with mean $\theta := \mathbb{E}[y_i]$ and variance $\tau^2 := \Var(y_i)$. Then,
    \begin{align}
    \label{eqn:asymptotic-distribution-superpop}
        \sqrt{n}\left( \thetahat_n - \theta \right) \toD \mathcal{N}\left( 0, \sigma^2 + \rho \tau^2 \right),
    \end{align}
    where $\rho := \lim n/N$ from Theorem \ref{thm:pai-asymptotic-distribution}.
\end{proposition}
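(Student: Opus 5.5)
The plan is to work from the decomposition in Equation \ref{eqn:superpopulation-error-decomposition} and show that the two terms converge jointly to independent Gaussians. The second term is handled by the classical CLT: since $y_i \simIID F$ with finite variance $\tau^2$, we have $\sqrt{N}(\theta_N - \theta) \toD \mathcal{N}(0, \tau^2)$. Because $\sqrt{n/N} \to \sqrt{\rho}$, Slutsky's lemma gives that the superpopulation error converges to $\mathcal{N}(0, \rho\tau^2)$. If $\rho = 0$, this term is $o_P(1)$ and the result follows directly from Theorem \ref{thm:pai-asymptotic-distribution} by Slutsky, so the substantive case is $\rho > 0$.

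For the first term, we condition on the realized population $\mathcal{P}_N := \{(x_i,y_i)\}_{i=1}^N$. Conditional on $\mathcal{P}_N$, the setup is exactly that of Theorem \ref{thm:pai-asymptotic-distribution}: the population is fixed and only the sampling is random. The martingale CLT in its proof therefore yields
\begin{align*}
\sqrt{n}(\thetahat_n - \theta_N) \mid \mathcal{P}_N \toD \mathcal{N}(0, \sigma^2),
\end{align*}
provided Assumptions \ref{a1:variance_stabilization}--\ref{a3:4th-moment-control} hold for almost every (or with probability tending to one over) realized population sequence, with a deterministic limit $\sigma^2$. Here I interpret ``retain the assumptions'' as requiring that the conditional variance stabilizes to a nonrandom $\sigma^2$, in probability jointly over population and sampling.

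To combine the two terms, I will use characteristic functions. Write $S_n := \sqrt{n}(\thetahat_n - \theta_N)$ and $T_N := \sqrt{n/N}\,\sqrt{N}(\theta_N - \theta)$; note that $T_N$ is $\mathcal{P}_N$-measurable. Then
\begin{align*}
\E\bigl[e^{iu(S_n + T_N)}\bigr] = \E\Bigl[e^{iuT_N}\,\E\bigl[e^{iuS_n}\mid \mathcal{P}_N\bigr]\Bigr].
\end{align*}
The next step is to show that $\E[e^{iuS_n}\mid\mathcal{P}_N] - e^{-u^2\sigma^2/2} \toP 0$. This is the conditional (stable) form of the martingale CLT, since the martingale difference array is adapted to $\F_t$ augmented by $\mathcal{P}_N$. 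With this in hand, bounded convergence gives
\begin{align*}
\E\bigl[e^{iu(S_n+T_N)}\bigr] = e^{-u^2\sigma^2/2}\,\E\bigl[e^{iuT_N}\bigr] + o(1) \to e^{-u^2(\sigma^2+\rho\tau^2)/2},
\end{align*}
and L\'evy's continuity theorem completes the proof.

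The main obstacle is the conditional convergence step. Theorem \ref{thm:pai-asymptotic-distribution} is stated for a fixed triangular array, whereas here the population itself is random. I will handle this by invoking the martingale CLT with the enlarged filtration $\mathcal{G}_t := \sigma(\mathcal{P}_N, \F_t)$. The increments $\phi_t - \theta_N$ remain martingale differences under $\mathcal{G}_t$, because $\E[\phi_t \mid \mathcal{G}_{t-1}] = \theta_N$ by the same unbiasedness computation as in Theorem \ref{thm:pai-asymptotic-distribution}. The conditional variance and Lindeberg conditions then hold in probability by the assumptions.

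Since $\mathcal{P}_N \subseteq \mathcal{G}_0$, the stable-convergence version of the martingale CLT (Hall--Heyde) gives $S_n$ converging $\mathcal{G}_0$-stably to $\mathcal{N}(0,\sigma^2)$, independent of $\mathcal{G}_0$. This is precisely the asymptotic independence from $T_N$ that the characteristic-function argument needs. The key requirement is that $\sigma^2$ be deterministic; a random limit measurable with respect to $\mathcal{P}_N$ would produce a variance mixture instead.
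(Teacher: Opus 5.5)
Your proposal is correct and is essentially the paper's argument: both condition on the population $\sigma$-field, pull the population-measurable superpopulation error out by the tower property, replace the conditional law of $\sqrt{n}(\thetahat_n-\theta_N)$ by its $\mathcal{N}(0,\sigma^2)$ limit, and finish with the i.i.d.\ CLT. The only cosmetic difference is that the paper factorizes joint CDFs and applies continuous mapping, whereas you use characteristic functions and L\'evy continuity. Your explicit requirement that $\E[e^{iuS_n}\mid\mathcal{P}_N]\to e^{-u^2\sigma^2/2}$ in probability, with deterministic $\sigma^2$ and the enlarged filtration $\mathcal{G}_t$, is a more careful version of the step the paper justifies only by citing Theorem~\ref{thm:pai-asymptotic-distribution} and dominated convergence. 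Strictly, because $\mathcal{G}_0$ changes with $m$ and the filtrations are not nested, this conditional statement comes from running the standard product-martingale proof of the CLT on $\mathcal{G}_t$ rather than from Hall--Heyde's fixed-event stable convergence, but it does hold.
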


\begin{proof}
    Please see Appendix \ref{appendix:proof-of-superpopulation-asymptotic-distribution}.
\end{proof}
By Proposition \ref{prop:superpopulation-asymptotic-distribution} and Slutsky's Theorem, an asymptotically valid $(1-\alpha)$-confidence interval for the superpopulation mean $\theta$ is
\begin{align}
    \label{eqn:superpopulation-interval}
    \left[ \thetahat_n \pm z_{1 - \alpha / 2} \sqrt{\frac{\sigmahat_{n,\text{cor}}^2 + (n/N) \tauhat_n^2}{n}} \right],
\end{align}
where $\sigmahat_{n,\text{cor}}^2$ is our consistent bias-corrected variance estimator from Section \ref{sec:bias-corrected-variance-estimator}, and $\tauhat_n^2$ is any consistent estimator of $\tau^2$. To estimate the superpopulation variance $\tau^2 = \mathbb{E}[y_i^2] - \theta^2$, one simple approach is to apply PAI again to estimate the finite-population second moment $\theta_{N,2} := N^{-1} \sum_{i=1}^N y_i^2$, using the \textit{exact same queried indices} $I_1, \ldots, I_n$ used to construct $\thetahat_n$. This procedure simply treats $y_i^2$ as the outcome, with corresponding ML predictions (e.g., $\hat{g}^{(t)}(x_i) := ( \fhat^{(t)}(x_i) )^2$). Denoting the resulting PAI estimator by $\thetahat_{n,2}$, we construct
\begin{align}
    \label{eqn:second-pai-estimator-for-second-moment}
    \tauhat_n^2 := \thetahat_{n,2} - \thetahat_n^2,
\end{align}
which is consistent for $\tau^2$ under the corresponding regularity conditions of Theorem \ref{thm:pai-asymptotic-distribution}. Importantly, this procedure requires \textit{no additional labels}, since observing $y_{I_t}$ also reveals $y_{I_t}^2$. For binary outcomes $y_i \in \{ 0, 1 \}$, we can alternatively use the even simpler $\tauhat_n^2 = \thetahat_n (1 - \thetahat_n)$, since $y_i^2 = y_i$.

Furthermore, we note that our superpopulation confidence interval avoids both Bonferroni corrections and $\alpha$-splitting by directly accounting for both finite-population sampling uncertainty and superpopulation variability in a single asymptotic distribution.

\section{A theoretical characterization of PAI's efficiency}
\label{sec:anti-sai/bai}

Having established the PAI estimator and its inferential guarantees in Section \ref{sec:pai-methods}, we now theoretically investigate when and why PAI can yield efficiency gains over active inference in the finite-population setting.

For our theoretical analysis, we consider a quasi-oracle setting where both PAI and Batch Active Inference (BAI), the batch variant of SAI, share the same fixed ML model $f$, with oracle-known prediction residuals $e_i := y_i - f(x_i)$ and absolute residuals $r_i := |e_i|$.\footnote{Technically, we are working with BAI (just written using sequential notation) rather than SAI because the ML model is fixed. In other words, we are treating BAI as a special case of SAI.} Recall that BAI estimates the finite-population mean using
\begin{align}
    \label{eqn:bai-sai}
    \thetahat_{\text{active}} := \frac{1}{N} \sum_{t=1}^N \Gamma_t, \quad \Gamma_t = f(x_t) + \left( y_t - f(x_t) \right) \frac{\xi_t}{\pi_t(x_t)},
\end{align}
where $\pi_t(x_t)$ is the probability of labeling observation $t$ at time $t$ and $\xi_t \sim \text{Bern}(\pi_t(x_t))$ indicates whether its label was acquired. Following \citet[Section 4 and Equation 10]{zrnic2024active}'s recommended oracle sampling rule, we let both BAI and PAI sample proportionally to the true absolute prediction residuals:
\begin{align}
    \text{BAI:} \quad \pi_t(x_t) \propto r_t \implies \pi_t(x_t) = \frac{n r_t}{\sum_{i=1}^N r_i}, \quad \text{PAI:} \quad q_t(i) \propto r_i \implies q_t(i) = \frac{r_i}{\sum_{k \notin O_{t-1}} r_k},
\end{align}
with the $n$ present in BAI's label-skip probability because we want the expected number of BAI queries to be $n$.\footnote{Technically, we are assuming no clipping to $[0, 1]$ is necessary for mathematical simplicity. We also ignore uniform $\tau$-mixing and budget-stabilization (c.f. Sections 6 and 7.2 of \cite{zrnic2024active}) as we are in an idealized quasi-oracle regime.} By construction, PAI will achieve an exact budget of $n$ labels.

We adopt this quasi-oracle comparison setting to isolate the efficiency differences between PAI and BAI, giving both methods access to the same fixed ML model and analogous oracle sampling policies based on the true absolute prediction residuals. Fixing the ML model and omitting uniform mixing and budget stabilization also makes our theoretical comparison analytically tractable.

Because both PAI and BAI are unbiased for $\theta$, to characterize their performance differences, it suffices to compare their variances. Let $R_{t-1} = \mathcal{I} \setminus O_{t-1}$ represent the set of unlabeled indices after PAI sampling round $t-1$, where $\mathcal{I} = [1:N]$ is our full set of indices. Furthermore, for any set of indices $R \subseteq \mathcal{I}$, define the sums of unsigned and signed residuals $U(R)$ and $S(R)$, respectively:
\begin{align}
    \label{eqn:sums-of-unsigned/signed-residuals}
    U(R) := \sum_{i \in R} r_i, \quad S(R) := \sum_{i \in R} e_i.
\end{align}
Finally, define $b_t$ to be the \textit{net signed direction} of the residual mass before PAI sampling round $t$, and define $\bbar^2$ to be the \textit{weighted-average squared net signed direction} of the residual mass over the $n$ PAI sampling rounds:
\begin{align}
    b_t := \frac{S(R_{t-1})}{U(R_{t-1})} = \frac{\sum_{i \in R_{t-1}} \left( y_i - f(x_i) \right)}{\sum_{i \in R_{t-1}} |y_i - f(x_i)|}, \quad \bbar^2 := \frac{\sum_{t=1}^n \mathbb{E}\left[ U(R_{t-1})^2 b_t^2\right]}{\sum_{t=1}^n \mathbb{E}\left[ U(R_{t-1})^2 \right]} \in [0, 1].
\end{align}
Under the above setup, we obtain a theoretical characterization of PAI's sampling efficiency over BAI.

\begin{theorem}
\label{thm:pai-bai-var}
    Under the quasi-oracle setting above, assuming the below relative variance ratio is well-defined, we have
    \begin{align}
    \label{eqn:pai-bai-var-expression}
        \frac{\Var(\thetahat_{\text{active}})}{\Var(\thetahat_{\text{PAI}})} = \frac{1}{1 - \bbar^2} \cdot \frac{1}{1 + \eta_n},
    \end{align}
    where $\eta_n \geq 0$ is a remainder term capturing additional finite-budget and residual-magnitude effects (see Appendix \ref{appendix:proof-of-anti-bai-thm} for the full derivation). Furthermore, if $n r_{\text{max}} / U(R_0) \to 0$,\footnote{The sufficient condition above admits the decomposition $n r_{\text{max}} / U(R_0) = n/N \cdot r_{\text{max}} / \bar{r}$, where $\bar{r} := N^{-1} \sum_{i=1}^N r_i.$ Since $r_{\text{max}} \geq \bar{r}$, this condition necessarily requires $n/N \to 0$, which is permitted by Theorem \ref{thm:pai-asymptotic-distribution}. More precisely, the labeling fraction must vanish sufficiently quickly relative to $r_{\text{max}} / \bar{r}$. In particular, $n/N \to 0$ suffices if $r_{\text{max}} / \bar{r} = O(1)$.} with $r_{\text{max}} = \max_{i \in [1:N]} r_i$, then $\eta_n \to 0$, yielding
    \begin{align*}
        \frac{\Var(\thetahat_{\text{active}})}{\Var(\thetahat_{\text{PAI}})} = \frac{1 + o(1)}{1 - \bbar^2}.
    \end{align*}
\end{theorem}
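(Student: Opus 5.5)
The plan is to compute both variances in closed form, using the unbiasedness of each estimator, and then read off $\bbar^2$ and $\eta_n$ from the ratio. Write $U_t := U(R_t)$ and $S_t := S(R_t)$.

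\textbf{BAI variance.} The $\xi_t$ are independent Bernoulli draws with $\pi_t = n r_t/U_0$. Hence
\[
\Var(\thetahat_{\text{active}}) = \frac{1}{N^2}\sum_{t=1}^N e_t^2\,\frac{1-\pi_t}{\pi_t}.
\]
Since $e_t^2 = r_t^2$, this equals
\[
\frac{1}{N^2}\sum_t\Big(\frac{r_t U_0}{n} - r_t^2\Big) = \frac{1}{N^2}\Big(\frac{U_0^2}{n} - \sum_{i} r_i^2\Big).
\]

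\textbf{PAI variance.} As in the unbiasedness part of Theorem~\ref{thm:pai-asymptotic-distribution}, the increments $D_t := \phi_t - \theta$ form a martingale difference sequence. They are therefore uncorrelated, and
\[
\Var(\thetahat_{\text{PAI}}) = n^{-2}\sum_{t=1}^n \E\big[\Var(\phi_t\mid\F_{t-1})\big].
\]
With $f$ fixed, we have $N(\phi_t-\theta) = e_{I_t}/q_t(I_t) - S_{t-1}$. Substituting $q_t(i) = r_i/U_{t-1}$ gives
\[
N^2\Var(\phi_t\mid\F_{t-1}) = \sum_{i\in R_{t-1}} \frac{e_i^2}{q_t(i)} - S_{t-1}^2 = U_{t-1}^2 - S_{t-1}^2 = U_{t-1}^2(1-b_t^2).
\]
Summing over $t$ and using the definition of $\bbar^2$,
\[
\Var(\thetahat_{\text{PAI}}) = \frac{1-\bbar^2}{n^2N^2}\sum_{t=1}^n \E[U_{t-1}^2].
\]
Taking the ratio of the two variances gives \eqref{eqn:pai-bai-var-expression} with
\[
1+\eta_n := \frac{\sum_{t=1}^n\E[U_{t-1}^2]}{nU_0^2 - n^2\sum_i r_i^2},
\]
where the denominator is positive precisely when the ratio is well-defined.

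\textbf{Showing $\eta_n \geq 0$.} This is the main step. It requires $\sum_t \E[U_{t-1}^2] \ge nU_0^2 - n^2\sum_i r_i^2$, which does not follow from crude monotonicity, since $U_t$ decreases. I would track the one-step drift of $U^2$. Conditionally on $\F_{s-1}$, the index $I_s$ has probability $r_i/U_{s-1}$, so
\[
\E[U_{s-1}^2 - U_s^2\mid\F_{s-1}] = 2U_{s-1}\,\E[r_{I_s}\mid\F_{s-1}] - \E[r_{I_s}^2\mid\F_{s-1}].
\]
Here $\E[r_{I_s}\mid\F_{s-1}] = \sum_{i\in R_{s-1}} r_i^2/U_{s-1}$, so the first term equals $2\sum_{i\in R_{s-1}}r_i^2$. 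Dropping the nonpositive second term gives the bound $2\sum_i r_i^2$.

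Iterating yields $\E[U_{t-1}^2] \ge U_0^2 - 2(t-1)\sum_i r_i^2$. Summing over $t$ gives
\[
\sum_{t=1}^n \E[U_{t-1}^2] \ge nU_0^2 - n(n-1)\sum_i r_i^2 \ge nU_0^2 - n^2\sum_i r_i^2,
\]
which is $\eta_n \ge 0$.

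\textbf{Showing $\eta_n \to 0$.} Use the trivial upper bound $U_{t-1}\le U_0$. This gives
\[
\eta_n \le \frac{n^2\sum_i r_i^2}{nU_0^2 - n^2\sum_i r_i^2} = \frac{x_n}{1-x_n}, \qquad x_n := \frac{n\sum_i r_i^2}{U_0^2}.
\]
Since $\sum_i r_i^2 \le r_{\max}U_0$, we have $x_n \le n r_{\max}/U_0 \to 0$. Hence $\eta_n \to 0$, and the ratio equals $(1+o(1))/(1-\bbar^2)$.
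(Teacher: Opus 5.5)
Your proof is correct, and up through $\eta_n\ge 0$ it follows the paper's argument. The paper computes the same two variances and the same one-step drift, $\E[U(R_t)^2\mid R_{t-1}] = U(R_{t-1})^2 - 2Q(R_{t-1}) + C(R_{t-1})/U(R_{t-1})$, where $Q(R)=\sum_{i\in R} r_i^2$ and $C(R)=\sum_{i\in R} r_i^3$. It bounds this drift exactly as you do, using $C/U\ge 0$ and $Q(R_s)\le Q(R_0)$. If you keep your $n(n-1)$-versus-$n^2$ slack, you recover the paper's slightly sharper bound $\eta_n\ge Q(R_0)/(U(R_0)^2-nQ(R_0))$.

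The routes diverge at $\eta_n\to 0$. The paper keeps the depletion drift in the upper bound, using $Q(R_s)\ge Q(R_0)-s r_{\max}^2$ and $C(R_s)/U(R_s)\le r_{\max}^2$, and obtains
\[
\eta_n \le \frac{Q(R_0) + (n-1)(2n-1)r_{\max}^2/6}{U(R_0)^2 - nQ(R_0)}.
\]
You instead discard the drift via $U(R_{t-1})\le U(R_0)$.

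Your version is shorter. It also needs only $x_n = nQ(R_0)/U(R_0)^2 \to 0$, which is implied by, but strictly weaker than, the paper's hypothesis $n r_{\max}/U(R_0)\to 0$. For example, take one residual of size $U(R_0)/(2n)$ among unit residuals, with $n\to\infty$ and $n/N\to 0$. This respects no-clipping and keeps $n r_{\max}/U(R_0)=1/2$, yet $x_n\approx 1/(4n)+n/N\to 0$.

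The paper's version buys quantitative sharpness. Its lower and upper bounds share the leading term $Q(R_0)/(U(R_0)^2-nQ(R_0))$. For homogeneous residuals $r_i\equiv 1$ its upper bound is attained, giving $\eta_n = \bigl(N+(n-1)(2n-1)/6\bigr)/\bigl(N(N-n)\bigr)\approx 1/N + n^2/(3N^2)$. Your bound only gives $\eta_n\le n/(N-n)$ in that case.

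One small correction: positivity of your denominator $nU(R_0)^2-n^2Q(R_0)$ is equivalent to $\Var(\thetahat_{\text{active}})>0$, not to well-definedness of the ratio. Well-definedness concerns $\Var(\thetahat_{\text{PAI}})>0$, i.e.\ $\bbar^2<1$. The factorization needs both conditions, as does the paper's, which tacitly divides by $\Var(\thetahat_{\text{active}})$.
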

\begin{proof}
    Please see Appendix \ref{appendix:proof-of-anti-bai-thm}.
\end{proof}
In context, Theorem \ref{thm:pai-bai-var} suggests that PAI can substantially outperform BAI when (i) the ML model's prediction residuals systematically point in the same direction, rather than canceling each other out (i.e., when $\bbar^2$ is large); and (ii) the sampling policy effectively prioritizes observations with large absolute prediction residuals (as in our quasi-oracle setting). We do emphasize that such prediction residual imbalance is \textit{not} a measure of predictive accuracy: a highly accurate model can still have a large $\bbar^2$ if its small residuals consistently point in one direction. Alternatively, a large $\bbar^2$ could arise when the ML model predicts most observations accurately, but exhibits a systematic bias on a small subgroup of the population. 

In Section \ref{sec:results}, we will investigate whether this theoretical insight empirically carries over to more realistic settings with learned acquisition policies and online ML updates.

\section{Experimental setup}
\label{sec:experimental-setup}

Our empirical experiments seek to characterize PAI's sampling efficiency and statistical coverage relative to its intellectual predecessor Sequential Active Inference (SAI) \citep{zrnic2024active}, the closely related contemporary method Prediction-Powered Active Testing (PPAT) \citep{ashouritaklimi2026prediction}, and the simpler baselines Prediction-Powered Inference (PPI) \citep{angelopoulos2023prediction} and uniform sampling. We evaluate these methods across eight real-world datasets spanning computational biology, LLM evaluations, economics, fraud detection, and cybersecurity.

We focus on finite-population mean estimation with binary outcomes $y_i \in \{ 0, 1\}$. The finite-population setting allows us to directly evaluate statistical coverage against the known population mean, whereas the superpopulation mean is generally unknown for our testbeds. Focusing on binary outcomes also provides simple, well-established uncertainty scores for guiding acquisition (e.g., $u_t(x) = 2\min(f(x), 1-f(x))$, where $f$ is a probabilistic ML predictor; see \citet[Section 4]{zrnic2024active}), avoiding additional modeling and acquisition-policy choices that arise in regression settings and could complicate the interpretation of performance differences. We emphasize that none of these methods are restricted to binary outcomes.

Importantly, to isolate differences between the inferential methods, we use the same underlying ML model and initialization across PAI, SAI, PPI, and PPAT, as well as the same online learning procedure wherever applicable. We also use the same underlying uncertainty scores and uniform-mixing parameter for PAI and SAI, following the recommendations of \citet{zrnic2024active}, while respecting PPAT's own estimator and recommended acquisition mechanisms. Rather than engaging in extensive dataset-specific hyperparameter tuning, our goal is to understand how effectively these methods leverage the same underlying predictive information for statistical inference.

\subsection{Datasets and ML models} 

All dataset processing routines can be found in the \texttt{data\_processing} notebook in our Github \href{https://github.com/skbwu/pro-active-inference}{repository}.

\paragraph{ACS} We use the ACSIncome task \citep{ding2021retiring}, where the binary outcome indicates whether an individual's annual income exceeds $\$50,000$, using demographic covariates. Adapting their distribution-shift setting, we train the initial ML model on California and evaluate on Puerto Rico, motivated by the importance of label-efficient data collection in resource-constrained settings, such as following natural disasters. Our test population consists of $9,071$ individuals with $11$ covariates.

\paragraph{AlphaFold} We use the protein-disorder dataset from \citet{angelopoulos2023prediction}, originally derived from \citet{bludau2022structural}. Each observation corresponds to an amino-acid residue, with a binary outcome indicating whether it belongs to an intrinsically disordered region. We estimate the proportion of disordered residues among $10,802$ observations, using probabilistic ML predictions obtained from a logistic regression model fitted to AlphaFold-derived structural features. Since this dataset provides only the predictions and labels, without the underlying covariates, all ML-assisted methods use fixed predictions without online updates. For SAI, this corresponds to the Batch Active Inference (BAI) setting.

\paragraph{LLM evaluations} We study estimation of LLM accuracy on two benchmark suites: MMLU-Pro \citep{wang2024mmlu}, containing $12,032$ questions, and a composite of BBH, GPQA, IFEval, MATH, and MuSR \citep{suzgun2023challenging,rein2024gpqa,zhou2023instruction,hendrycksmath2021,spraguemusr}, containing $9,574$ questions. Following \citet{wu2026efficient}, we use their train/test splits and $64$-dimensional question and model factors fitted to historical evaluation data from \citet{open-llm-leaderboard-v2,myrzakhan2024open}, with approximately $2.2$K held-out LLMs per suite. For each held-out LLM, we fix the historical question factors and initialize its model factor at the empirical mean of historical model factors, using their empirical marginal variances to initialize the diagonal curvature for online logistic regression updates. These testbeds also allow us to directly compare the sampling efficiency of our without-replacement PAI estimator against its with-replacement predecessor from \citet{wu2026efficient}.

\paragraph{BODMAS} We use the BODMAS malware detection dataset \citep{bodmas}, where each observation represents a software sample with a binary label indicating whether it is malicious or benign. We employ a chronological 50/50 train-test split, fitting the initial ML model on the earlier observations and performing inference on the later $67,218$ observations. After removing features that are identically zero in the training split, each observation has $2,330$ covariates.

\paragraph{HIV} We use the HIV molecular activity dataset from MoleculeNet \citep{wu2018moleculenet}, accessed through \href{https://github.com/deepchem/deepchem}{DeepChem}. Each observation represents a chemical compound screened for its ability to inhibit HIV replication, with confirmed or moderately active compounds labeled as positive and inactive compounds labeled as negative. We represent compounds using ECFP molecular fingerprints and employ a 50/50 train-test split, yielding a test population of $20,560$ compounds with $1,025$ covariates.

\paragraph{Pew} We use Wave 168 of the Pew Research Center's \href{https://www.pewresearch.org/dataset/american-trends-panel-wave-168/}{American Trends Panel}, which surveys U.S. adults about online scams and related topics. Our binary outcome indicates whether a respondent reports having ever lost money because of an online scam or attack. We use demographic covariates and responses to other scam-related survey questions and employ a 50/50 train-test split, yielding a test population of $4,273$ respondents with $53$ covariates.

\paragraph{ULB} We use the ULB Credit Card Fraud Detection dataset \citep{dal2015calibrating}, containing anonymized credit card transactions from European cardholders, with a binary outcome indicating whether a transaction is fraudulent. The covariates comprise $28$ PCA-transformed features, transaction time, and amount. We employ a 50/50 train-test split, yielding a test population of $142,404$ transactions with $30$ covariates. This highly imbalanced dataset, with approximately $0.17\%$ fraudulent transactions, allows us to investigate statistical inference for rare events.

\paragraph{ML models and online updates} For all ML-assisted methods (i.e., PAI, SAI, PPI, and PPAT), we use the same underlying probabilistic ML model and online updating procedure whenever applicable. For all datasets except the LLM evaluation suites and AlphaFold, we standardize covariates using training-split statistics and fit an initial logistic regression on the training data, initializing its diagonal curvature approximation to the identity. For the two LLM evaluation suites, following \cite{wu2026efficient}, we instead initialize the model coefficients using the empirical mean of historical model factors and the diagonal curvature using their inverse empirical marginal variances, retaining the original question-factor representations. For all datasets except AlphaFold, we update the logistic regression model after every acquired label using computationally-efficient diagonal stochastic Newton updates adapted from \citet{bercu2020efficient}, as detailed in Appendix \ref{appendix:olr-updates}. AlphaFold is the sole exception, since only fixed ML predictions are available.

\subsection{Baselines and competitor methods}

\paragraph{PAI and SAI} For PAI and SAI, we use the same underlying ML predictor and online updating procedure. Following \citet[Sections 4 and 7.2]{zrnic2024active}, both methods also use the uncertainty score $u_t(x) = 2 \min(\fhat^{(t-1)}(x), 1 - \fhat^{(t-1)}(x))$ and uniform-mixing parameter $\tau = 0.5$. At each round $t$, PAI samples without replacement from the remaining unlabeled pool $R_{t-1}$ according to the acquisition policy
\begin{align}
    \label{eqn:pai-standardized-sampling-policy}
    q_t(i) = \frac{\tau}{|R_{t-1}|} + (1 - \tau) \frac{u_t(x_i)}{\sum_{j \in R_{t-1}} u_t(x_j)}.
\end{align}
In contrast, SAI processes the test observations sequentially, at round $t$ choosing whether to acquire label $y_t$ with probability $\pi_t(x_t)$. Its mean estimator is
\begin{align}
    \label{eqn:sai}
    \thetahat_{\text{active}} := \frac{1}{N} \sum_{t=1}^N \left( \fhat^{(t-1)}(x_t) + \left( y_t - \fhat^{(t-1)}(x_t) \right) \frac{\xi_t}{\pi_t(x_t)} \right),
\end{align}
where $\xi_t \sim \text{Bern}(\pi_t(x_t))$ indicates whether the label is acquired. To target a labeling budget of $n$ observations from a test population of size $N$, we use the following nominal acquisition probability:
\begin{align}
    \pi_t^{\text{nom}}(x_t) = \tau \left( \frac{n}{N} \right) + \left( 1 - \tau \right) \frac{n u_t(x_t)}{\sum_{j=1}^N u_t(x_j)}.
\end{align}
To prevent budget underutilization and ensure good-faith comparisons, we additionally implement the SAI budget-stabilization rule of \citet[Equation 8]{zrnic2024active} and their periodic budget catch-up heuristic from Section 8.4. These adjustments are applied to the uncertainty-based probabilities before mixing with the uniform component, with the resulting acquisition probabilities $\pi_t(x_t)$ used in the SAI estimator.

\paragraph{PPI and uniform sampling} We implement PPI as SAI (Equation \ref{eqn:sai}) with uniform Bernoulli acquisition probabilities $\pi_t(x_t) = n/N$, retaining the shared ML predictions and online updates in the estimator. Our implementation of uniform sampling, while preserving the SAI estimator structure, additionally removes the ML predictions, recovering the corresponding inverse-probability-weighted estimator. Thus, PPI isolates the benefits of ML-assisted inference without adaptive acquisition, while uniform sampling serves as a baseline without any ML assistance. Because both methods use fixed acquisition probabilities, neither requires SAI's budget-stabilization rule or periodic budget catch-up heuristic.

\paragraph{PPAT} We implement the online-tuned $\hat{\lambda}_M$ variant of PPAT from \citet{ashouritaklimi2026prediction}, using its prediction-powered control-variate estimator and proposed acquisition policy specialized to binary outcomes (Section 3.2). Following Section 5.1, we initialize $\hat\lambda_M = 0.5$ and update it after every $100$ acquired labels. We also use uniform mixing with $\tau = 0.5$, matching PAI and SAI for consistency. We initialize PPAT's control-variate proxy and acquisition surrogate using the same initial ML predictor as the other ML-assisted methods. During sampling, we update the acquisition surrogate after every acquired label using our shared online learning procedure, while keeping the control-variate proxy fixed. Importantly, this is precisely the fixed-proxy configuration adopted by the PPAT authors themselves when benchmarking against Active Statistical Inference in Appendix C.4.3, combined with their online surrogate updates in Section 5.1. Finally, we select the online-tuned $\hat{\lambda}_M$ variant over the fixed choices $\lambda \in \{ 0.5, 1 \}$ because it had the strongest performance in our preliminary comparisons.

\subsection{Budgets, metrics, and coverage}

\paragraph{Budgets and coverage} For each dataset, we consider ten evenly-spaced budgets $n \in [2.5\%, 25\%]$ of the test population and target $95\%$ nominal coverage for all experiments. All methods construct nominal $95\%$ Wald-type confidence intervals of the form $[\thetahat_{\text{method}}(n) \pm 1.96 \sigmahat_{\text{method}}(n)]$, using their respective plug-in variance estimators. For each method and budget, we run $1000$ random seeds on the non-LLM datasets, and $100$ random seeds for each of approximately $2.2$K held-out LLMs on the two LLM evaluation suites. We compute empirical coverage as the proportion of confidence intervals containing the true finite-population mean $\theta$, averaging over random seeds and, for the LLM evaluation suites, held-out LLMs.

\paragraph{Metrics} To quantify sampling efficiency, we use effective sample size multipliers (ESS multipliers) based on estimated variances. For a method $M$ and reference method $B$, both evaluated at labeling budget $n$, we define the \textit{ESS multiplier of $M$ with respect to $B$} as
\begin{align}
    \gamma_{\text{eff}}^{M/B}(n) := \hat\sigma_B^2(n) / \hat\sigma_M^2(n).
\end{align}
In particular, taking uniform sampling $U$ as the reference, the \textit{effective sample size} (ESS) of method $M$ is
\begin{align}
    n_{\text{eff}} := \gamma_{\text{eff}}^{M/U}(n) \cdot n,
\end{align}
with uniform sampling having ESS $n$ by definition. An ESS multiplier greater than $1$ indicates higher sampling efficiency than the reference method. For example, at budget $n$, a multiplier of $1.5$ with respect to uniform sampling means achieving approximately the same confidence interval width as uniform sampling with $1.5n$ labels, corresponding to roughly $33\%$ fewer labels for comparable precision. We report ESS multipliers relative to uniform sampling, SAI, PPI, and PPAT to facilitate comparisons across methods.

\section{Results}
\label{sec:results}

Our empirical results seek to investigate four primary questions. First, how does PAI compare in sampling efficiency and statistical coverage against its predecessors: uniform sampling, PPI, and SAI? Second, to what extent does our theoretical characterization of PAI's efficiency over active inference (Theorem \ref{thm:pai-bai-var}) carry over to real-data settings with learned acquisition policies and online ML updates? Third, how does PAI compare against the closely related contemporary method PPAT? Fourth, and perhaps most importantly, beyond simply reporting wins and losses, can we identify interpretable explanations for when and why PAI outperforms or underperforms these competing methods, yielding insights applicable to ML-assisted adaptive statistical inference more broadly?

We investigate each of these questions below. Before proceeding, however, we first discuss an important coverage failure on the ULB Credit Card Fraud dataset.

\paragraph{Coverages on the ULB dataset} We observe that all three methods using ML predictions to guide acquisition---PAI, SAI, and PPAT---severely undercover the nominal $95\%$ level at the smallest labeling budgets. At the $2.5\%$ budget, empirical coverages are $64.6\%$, $82.0\%$, and $63.4\%$ for PAI, SAI, and PPAT, respectively, compared to $92.1\%$ for both PPI and Uniform.

We hypothesize that this undercoverage is driven by the rarity of positive labels, potentially exacerbated by confidently mispredicted positives under ML-guided acquisition. Of the $N=142,404$ observations in the ULB test set, only $N_1 = 223$ (approximately $0.15\%$) are fraudulent transactions ($y_i = 1$). At the $2.5\%$ labeling budget, even uniform sampling would acquire only approximately $5.6$ positive labels in expectation, potentially compromising the accuracy of normal-approximation confidence intervals. Of these $223$ positive observations, the shared initial ML model assigns $39$ of them a fraud probability below $1\%$. These confidently mispredicted positives have large prediction residuals, yet can receive low acquisition probabilities under the ML-guided sampling policies. 

\begin{figure}[!t]
    \centering
    \includegraphics[width=1.0\linewidth]{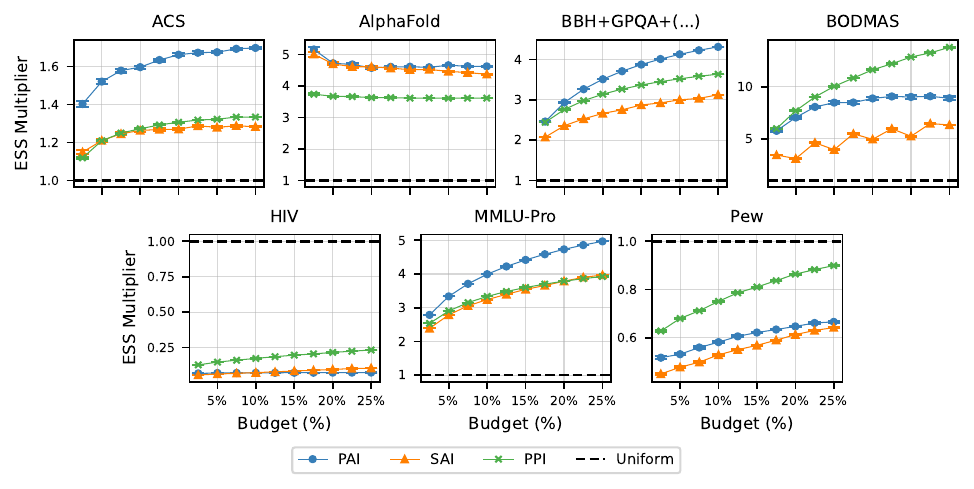}
    \caption{\textbf{Comparing sampling efficiency of Pro-Active Inference (PAI), Sequential Active Inference (SAI), and Prediction-Powered Inference (PPI) with respect to uniform sampling across seven datasets.} Each subplot shows the effective sample size (ESS) multiplier with respect to uniform sampling as a function of labeling budget for PAI, SAI, and PPI. For example, on ACS at a $10\%$ budget, an ESS multiplier of $1.6$ means PAI matches the confidence interval width of uniform sampling at a $16\%$ budget. Standard errors are shown, but negligible on the plot scale. \textbf{Higher ESS multipliers indicate better sampling efficiency, with values above $1$ indicating an improvement over uniform sampling.}}
    \label{fig:ess-pai-sai-ppi-uniform}
\end{figure}

Crucially, observing one of these confidently mispredicted positives can provide both (i) a substantial upward correction to the estimated population mean $\thetahat$; and (ii) evidence of substantial sampling variability, particularly through inverse-probability weighting contributions to the variance estimators. Sampling trajectories that fail to acquire these observations can therefore underestimate both $\thetahat$ and its standard error $\sigmahat$, yielding overly narrow confidence intervals that fail to cover the true population mean. 

Although the rarity of positive outcomes does not itself invalidate asymptotic inference, we suspect that the normal approximations underlying these methods are unreliable at the labeling budgets considered, with ML-guided acquisition potentially compounding this problem by undersampling rare but consequential positive observations.

Because apparent gains in sampling efficiency are not meaningful without adequate statistical coverage, we exclude ULB from the subsequent ESS comparisons and focus on the remaining seven datasets.

\subsection{How does PAI compare against uniform sampling, PPI, and SAI?}
\label{sec:pai-unif-ppi-sai}

On the remaining seven datasets, all methods achieve near-nominal coverage (Figure \ref{fig:all-coverages}), enabling meaningful comparisons of sampling efficiency. First, PAI achieves higher ESS than SAI on all $10$ of $10$ tested budgets on ACS, BBH+GPQA+IFEval+MATH+MuSR, BODMAS, MMLU-Pro, and Pew; on $9$ of $10$ tested budgets on AlphaFold; and on $3$ of $10$ tested budgets on HIV (Figure \ref{fig:ess-pai-sai-ppi-uniform}). One of the motivations for developing without-replacement PAI was to address limitations of our earlier with-replacement PAI formulation encountered in LLM evaluations \citep{wu2026efficient}. Reassuringly, on both LLM evaluation suites, without-replacement PAI consistently outperforms its with-replacement predecessor in sampling efficiency (Appendix Figure \ref{fig:pai-with-vs-without-replacement}). \textbf{In short, PAI represents a compelling improvement over SAI in pool-based settings}, supporting its role as a natural successor to SAI when the full covariate bank is available for proactive sampling.

\textbf{However, this does not mean that more sophisticated adaptive inference necessarily outperforms simpler methods such as Uniform and PPI.} Somewhat surprisingly, PAI only outperforms uniform sampling on $5$ of $7$ datasets (albeit across all tested budgets), while considerably underperforming on HIV and Pew at all tested budgets (Figure \ref{fig:ess-pai-sai-ppi-uniform}). Against PPI, PAI outperforms on $4$ of $7$ datasets (again, across all tested budgets), while substantially underperforming on BODMAS, HIV, and Pew. We defer a detailed investigation of these performance trends to Section \ref{section:diagnostics-and-why}.

\subsection{Does our theoretical efficiency characterization hold empirically?}
\label{sec:anti-bai-empirics}

\begin{figure}
    \centering
    \includegraphics[width=1.0\linewidth]{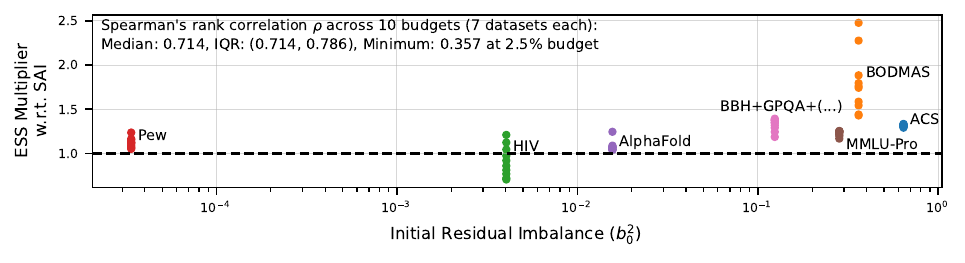}
    \caption{\textbf{ESS multipliers of PAI over SAI as a function of initial residual imbalance $b_0^2$.} Each color represents one dataset, with ten points corresponding to the ten labeling budgets tested. Unlike Figure \ref{fig:ess-pai-sai-ppi-uniform}, ESS multipliers here are measured with respect to SAI rather than uniform sampling. For example, an ESS multiplier of $2.5$ means that PAI matches the confidence interval width of SAI while using only $40\%$ as many labels. \textbf{Higher ESS multipliers indicate better sampling efficiency, with values above $1$ indicating that PAI outperforms SAI.} Spearman rank correlations were computed across the seven datasets at each fixed budget, and we report the median, interquartile range (IQR), and minimum of the resulting ten correlations.}
    \label{fig:sai-theory-corroboration}
\end{figure}

Motivated by Theorem \ref{thm:pai-bai-var}, we define the quasi-oracle \textbf{initial residual imbalance} (IRB) diagnostic $b_0^2$ as a one-step analogue of the $\bbar^2$ quantity in our theoretical characterization, which can be computed without running full sample trajectories:
\begin{align}
    \label{eqn:irb}
    b_0^2 := \left( \frac{\sum_{i=1}^N \left( y_i - \fhat^{(0)}(x_i) \right)}{\sum_{i=1}^N \left| y_i - \fhat^{(0)}(x_i) \right|} \right)^2.
\end{align}
Intuitively, IRB measures the extent to which the initial ML model's prediction residuals systematically point in the same direction, rather than canceling each other out. Values near $0$ indicate that positive and negative residuals largely cancel, while values near $1$ indicate that residuals of one sign dominate in aggregate magnitude. As discussed in Section \ref{sec:anti-sai/bai}, IRB is not necessarily indicative of predictive accuracy. 

Figure \ref{fig:sai-theory-corroboration} provides strong empirical corroboration of the qualitative prediction from Theorem \ref{thm:pai-bai-var}. Even with online-updated ML models and acquisition policies, IRB exhibits a strong positive rank correlation with PAI's ESS multiplier gains over SAI across labeling budgets. While we cannot expect the exact $1 / (1 - \bbar^2)$ functional form to carry over to these more realistic settings, the empirical relationship is striking: larger initial residual imbalance is consistently associated with greater efficiency gains for PAI over SAI.

\paragraph{On quasi-oracle diagnostics} We emphasize that the diagnostics in this section (i.e., IRB, ECB, and IVR in Equations \ref{eqn:irb}-\ref{eqn:ivr}, respectively) are quasi-oracle, computed using the full test population of covariates and labels. These labels are used only for retrospective diagnostic analysis, not for guiding the estimators' acquisition policies or constructing their estimates. Our goal is to identify interpretable properties associated with sampling efficiency, without relying on post-hoc analyses of the realized sampling trajectories.

\subsection{How does PAI compare against PPAT?}
\label{sec:pai-vs-ppat}

\begin{figure}
    \centering
    \includegraphics[width=1.0\linewidth]{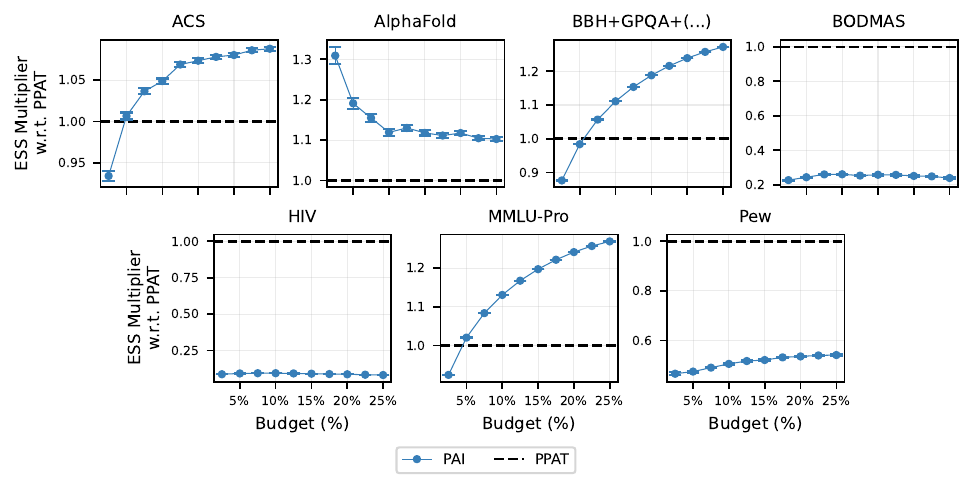}
    \caption{\textbf{ESS multipliers of PAI over PPAT across seven datasets.} Each subplot shows PAI's ESS multiplier with respect to PPAT as a function of labeling budget. For example, on AlphaFold at a $2.5\%$ budget, an ESS multiplier of $1.3$ means PAI matches the confidence interval width of PPAT at a $3.25\%$ budget. Standard errors are shown but are generally negligible at the plot scale. \textbf{Higher ESS multipliers indicate better sampling efficiency, with values above $1$ indicating that PAI outperforms PPAT.}}
    \label{fig:pai-vs-ppat}
\end{figure}
PAI outperforms PPAT on $10$ of $10$ tested budgets on AlphaFold; on $9$ of $10$ budgets each on ACS and MMLU-Pro; and on $8$ of $10$ budgets on BBH+GPQA+IFEval+MATH+MuSR (Figure \ref{fig:pai-vs-ppat}). However, PAI considerably underperforms PPAT on all tested budgets on BODMAS, HIV, and Pew. Curiously, these are precisely the same three datasets where PAI also underperforms PPI (Section \ref{sec:pai-unif-ppi-sai}). This common pattern motivates our investigation into when and why PAI gains or loses efficiency relative to competing methods, which we pursue in the following subsection.

\subsection{What can we learn from PAI's efficiency gains and losses?}
\label{section:diagnostics-and-why}

\begin{figure}[!t]
    \centering
    \includegraphics[width=1.0\linewidth]{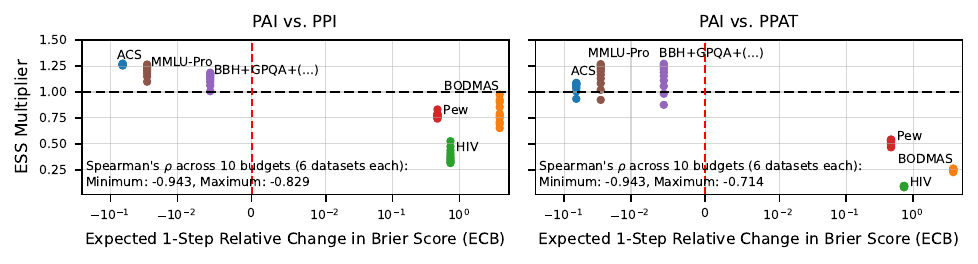}
    \caption{\textbf{ESS multipliers of PAI over PPI and PPAT as a function of the expected $1$-step relative change in Brier score (ECB).} Each color represents one dataset, with ten points corresponding to the ten labeling budgets tested. The left and right subplots show ESS multipliers with respect to PPI and PPAT, respectively. For example, on ACS in the left subplot, an ESS multiplier of $1.25$ means that PAI matches the confidence interval width of PPI while using $20\%$ fewer labels. \textbf{Higher ESS multipliers indicate better sampling efficiency, with values above $1$ indicating that PAI outperforms the corresponding competitor.} Spearman rank correlations were computed across the six datasets at each fixed budget, and we report the minimum and maximum of the resulting ten correlations in each subplot.}
    \label{fig:ppat+ppi-diagnostic}
\end{figure}

\subsubsection{When and why does PAI outperform or underperform PPI and PPAT?}
\label{section:diagnostics-ppi+ppat}

One potential explanation for PAI's relative efficiency gains and losses against PPI and PPAT lies in its increased sensitivity to the quality of online ML updates. Specifically, PAI incorporates updated ML predictions into \textit{both} (i) its plug-in estimate and residual correction; and (ii) its acquisition policy. In contrast, PPI, implemented in our experiments as SAI with a uniform acquisition policy, is only affected by online ML updates through its prediction-powered regression adjustment. Similarly, PPAT, as implemented in our experiments, only uses the online-updated ML model to guide acquisition (which \citet{ashouritaklimi2026prediction} refer to as the ``surrogate"), while the prediction-based control variate retains the initial ML predictions (which the authors refer to as the ``proxy"). Importantly, this fixed-proxy formulation is also used by the PPAT authors themselves in their empirical comparisons against Active Statistical Inference \citep[Section 5.1 and Appendix C.4.3]{ashouritaklimi2026prediction}.

Thus, while PAI has two channels of sensitivity to online prediction quality, PPI and PPAT, as implemented in our experiments, each have only one. We hypothesize that deteriorating online ML predictions can harm PAI's sampling efficiency through both channels: less accurate imputations can increase residual-estimation variability, while poorly-guided acquisition can further inflate variance by assigning unfavorable sampling probabilities to observations with large prediction residuals. Conversely, when online ML updates improve predictions, PAI can potentially benefit from both improved imputation and better-guided acquisition.

To investigate this hypothesis, we quantify how the initial online ML update affects probabilistic prediction quality. For each dataset, we define the \textbf{expected $1$-step relative change in Brier score} ($\text{ECB}$) as follows:
\begin{align}
    \text{ECB} := \frac{\sum_{i=1}^N q_1(i) \text{Brier}_{-i}\left( \fhat^{(0,+i)} \right)}{\sum_{i=1}^N q_1(i) \text{Brier}_{-i}\left( \fhat^{(0)} \right)} - 1, \quad \text{where } \text{Brier}_{-i}(f) := \frac{1}{N-1} \sum_{j=1,j\neq i}^N \left( y_j - f(x_j) \right)^2.
    \label{eqn:ecb}
\end{align}
Here, $q_1(i)$ is PAI's initial sampling policy (Equation \ref{eqn:pai-standardized-sampling-policy}), $\text{Brier}_{-i}(f)$ is the Brier score of model $f$ evaluated over the test population excluding index $i$, and $\fhat^{(0,+i)}$ is the online-updated ML model obtained after observing $(x_i, y_i)$ (see Equation \ref{eqn:olr-updates} in Appendix \ref{appendix:olr-updates}). Intuitively, ECB measures the relative change in our ML model's expected probabilistic prediction error after one online update, averaging over the initial PAI sampling policy. Negative ECB values indicate model \textit{improvement}, while positive values indicate \textit{deterioration}.

\FloatBarrier

Figure \ref{fig:ppat+ppi-diagnostic} provides strong empirical support for our hypothesis. Among the six datasets with online ML updates,\footnote{We exclude AlphaFold from this diagnostic because it provides fixed probabilistic ML predictions without access to an updatable ML model.} all three datasets where PAI outperforms both PPI and PPAT on the majority of tested budgets (ACS, MMLU-Pro, and BBH+GPQA+IFEval+MATH+MuSR) correspond to negative ECB values, while all three datasets where PAI underperforms both methods (BODMAS, HIV, and Pew) correspond to positive ECB values. Beyond this striking separation, ECB is also strongly negatively rank-correlated with PAI's ESS multipliers over both competing methods across the tested labeling budgets.

To further investigate this hypothesis, we conduct a frozen-ML ablation across all six datasets, holding PAI's initial ML model fixed throughout sampling and thereby disabling online updates in both imputation and acquisition (Appendix Figure \ref{fig:frozen-pai-ablation-all-datasets}). Strikingly, on the three datasets with negative ECB values and where PAI outperformed both PPI and PPAT (ACS and the two LLM evaluation suites), freezing substantially reduces PAI's relative sampling efficiency, reversing its advantages over both PPI and PPAT. Conversely, on the three datasets with positive ECB values (BODMAS, HIV, and Pew), freezing substantially improves PAI's relative efficiency, reversing its disadvantages against PPI on all three datasets and against PPAT on Pew and $6$ of $10$ budgets on HIV. On BODMAS, freezing substantially narrows PAI's disadvantage against PPAT, although it does not eliminate it.

We emphasize that the competing methods' implementations remain unchanged throughout these ablations, as our goal is to understand the effect of online updating on PAI's relative efficiency, using the competing methods as fixed reference points, rather than attempting to optimize PAI's performance against them. Furthermore, because freezing simultaneously affects PAI's imputation and acquisition, these ablations identify the combined effect of online updating without isolating the contribution of either channel.

\subsubsection{When and why does PAI outperform or underperform uniform sampling?}
\label{sec:pai-vs-uniform}

\begin{figure}[!t]
    \centering
    \includegraphics[width=1.0\linewidth]{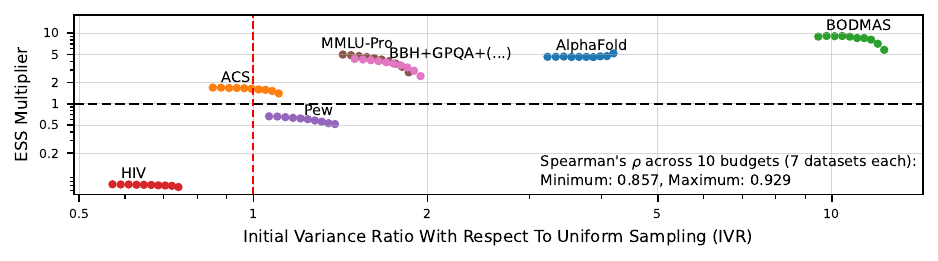}
    \caption{\textbf{ESS multipliers of PAI over uniform sampling as a function of initial variance ratio (IVR).} Each color represents one dataset, with ten points corresponding to the ten labeling budgets tested. An $\text{IVR}(n)$ greater than $1$ indicates that the initial variance approximation favors PAI over uniform sampling, suggesting a potential efficiency advantage for PAI. \textbf{Higher ESS multipliers indicate better sampling efficiency, with values above $1$ indicating that PAI outperforms uniform sampling.} Spearman rank correlations were computed across the seven datasets at each fixed budget, and we report the minimum and maximum of the resulting ten correlations.}
    \label{fig:uniform-diagnostic}
\end{figure}

Our empirical results raise an important question: when is it advantageous to use ML predictions and adaptive acquisition to estimate a population mean, rather than simply estimating the mean directly through uniform sampling? Intuitively, PAI uses ML predictions to reduce the estimation problem to estimating a residual correction. However, this residual-estimation problem is not necessarily less variable than directly estimating the population mean, particularly if the acquisition policy assigns unfavorable sampling probabilities to observations with large prediction residuals.

Interestingly, the two datasets where PAI underperforms uniform sampling (HIV and Pew) constitute a strict subset of the three datasets where PAI underperforms both PPI and PPAT (BODMAS, HIV, and Pew).

To investigate when residual estimation under PAI is more favorable than direct estimation under uniform sampling, we define the \textbf{initial variance ratio} (IVR) at labeling budget $n$ as 
\begin{align}
    \label{eqn:ivr}
    \text{IVR}(n) := \frac{\Var(\thetahat_{\text{unif},n})}{n^{-1} \Var(\phi_1)} = \frac{\left( 1 - n/N \right)\theta}{\sum_{i=1}^N q_1(i) \left( \bar{f}^{(0)} + \frac{1}{N} \frac{y_i - \fhat^{(0)}(x_i)}{q_1(i)} - \theta \right)^2}, \quad \text{where } \bar{f}^{(0)} := \frac{1}{N} \sum_{i=1}^N \fhat^{(0)}(x_i).
\end{align}
Here, uniform sampling is implemented using the no-ML Bernoulli label-skip estimator of \citet{zrnic2024active}, with uniform inclusion probability $n/N$. The second equality in Equation \ref{eqn:ivr} holds for binary outcomes $y_i \in \{ 0, 1\}$ under this sampling design.

Intuitively, IVR compares the variance of directly estimating the \textit{population mean} under uniform sampling against the initial variance of estimating the \textit{ML model's mean residual} under PAI's acquisition policy. Larger IVR values suggest that the residual-estimation problem is less noisy than directly estimating the outcome mean, and thus greater potential efficiency gains for PAI over uniform sampling. To obtain a simple, interpretable diagnostic without relying on realized sampling trajectories, we approximate PAI's variance at budget $n$ using its initial one-step variance, scaled by $1/n$. Thus, IVR captures the joint effect of the initial ML predictions and acquisition policy, without accounting for subsequent online model updates, corresponding changes in the acquisition policy, or without-replacement effects.

Figure \ref{fig:uniform-diagnostic} shows a strong positive rank correlation between IVR and PAI's realized ESS multipliers over uniform sampling across labeling budgets. This suggests that even the initial ML predictions and acquisition policy already contain substantial information about PAI's eventual relative sampling efficiency, despite the subsequent complexities of online ML updates and adaptive sampling.

\paragraph{What did we learn?} Tying together our insights from both Sections \ref{section:diagnostics-ppi+ppat} and \ref{sec:pai-vs-uniform}, we note that IVR and ECB offer complementary perspectives on PAI's sampling efficiency. While IVR characterizes the relative difficulty of the initial residual-estimation problem under PAI's acquisition policy, ECB characterizes how the quality of ML predictions changes after an initial online update. Our frozen-ML PAI ablations (described in Section \ref{section:diagnostics-ppi+ppat}; results in Appendix Figure \ref{fig:frozen-pai-ablation-all-datasets}) further demonstrate that subsequent online learning can substantially improve or deteriorate PAI's realized efficiency relative to its frozen-model counterpart. Thus, understanding when ML-assisted adaptive inference is advantageous requires considering both the initial difficulty of the residual-estimation problem and the subsequent behavior of the online learner.

\section{Discussion and conclusion}

\paragraph{Contributions} In this work, we develop a more general, without-replacement formulation of Pro-Active Inference (PAI), building on our earlier formulation for efficient LLM evaluation \citep{wu2026efficient}, and provide a comprehensive statistical treatment of its properties and broader applicability to ML-assisted statistical inference. Our original goal was to build upon the Active Statistical Inference framework of \citet{zrnic2024active} while pursuing two key desiderata: (i) making fuller use of the available covariate information by proactively selecting observations for labeling from the entire remaining pool, rather than restricting each round to a label-skip decision about the current observation in a predetermined sequence; and (ii) sampling without replacement.

Methodologically, we establish PAI's finite-sample unbiasedness and asymptotic validity, develop an improved finite-sample variance estimator, and extend PAI to superpopulation inference without the need for Bonferroni corrections or $\alpha$-splitting. We further characterize PAI's potential efficiency gains over Active Statistical Inference through a theoretical analysis highlighting the role of prediction-residual imbalance. Empirically, through extensive experiments across eight datasets, interpretable theory-motivated diagnostics, and frozen-ML ablations, we investigated when and why PAI outperforms or underperforms competing inference methods, with the broader goal of understanding when ML-assisted adaptive statistical inference itself is advantageous.

\paragraph{On our original goal of building upon Active Statistical Inference} Turning to our original goal of building upon Active Statistical Inference, we believe PAI largely achieves both of our stated desiderata. Beyond satisfying them by construction, PAI achieves greater sampling efficiency than SAI across the vast majority of our tested settings, while using the same underlying ML predictor and uncertainty score. Unlike SAI, which processes observations in a predetermined order and cannot revisit skipped observations, PAI can reconsider all remaining unlabeled observations after each acquisition. This flexibility can be particularly valuable when certain observations have large prediction residuals, as PAI retains opportunities to acquire their labels even if they were not selected initially. Our theoretical characterization further identifies prediction-residual imbalance as a key condition under which PAI can achieve substantial efficiency gains over Batch Active Inference. Reassuringly, our empirical initial residual imbalance (IRB) diagnostic strongly corroborates this theoretical intuition, even in more realistic settings with online ML updates and learned acquisition policies.

\paragraph{Understanding the efficiency of ML-assisted adaptive inference} Beyond PAI's improvements over SAI, our experiments highlight that more sophisticated ML-assisted adaptive inference does not necessarily outperform simpler methods such as PPI and uniform sampling. Our proposed diagnostics offer two complementary perspectives on when and why these efficiency gains or losses occur: (i) IVR characterizes how much easier it is to estimate the ML model's prediction residuals rather than the population mean directly; while (ii) ECB characterizes whether online ML updates improve or deteriorate prediction quality. Our frozen-ML ablations further corroborate the importance of online learning dynamics, while our comparisons with PPAT highlight how different uses of updated ML predictions in acquisition and estimation can affect efficiency.

\paragraph{From retrospective diagnostics to prospective guidance} Although our proposed diagnostics IRB, ECB, and IVR were computed retrospectively using the fully labeled test population, an appealing feature is that they summarize important characteristics of the inference problem into a small number of interpretable statistics, without requiring full sampling trajectories. This raises the possibility of approximating these diagnostics using representative validation data or small pilot samples, or reasoning about their values using domain-specific knowledge. For example, IVR depends partly on the unknown population mean $\theta$, for which domain scientists may be able to provide plausible ranges, allowing sensitivity analyses when combined with validation- or pilot-based estimates of the relevant residual moments. Similarly, preliminary assessments of online ML updates could help inform whether to continually update or freeze the predictor, or perhaps consider a different update rule. Such assessments could eventually guide the choice of inference method, acquisition policy, or online updating strategy before committing to expensive label collection. Developing and validating these prospective diagnostics constitutes promising future work, particularly given their dependence on the quality and representativeness of the available auxiliary data.

\paragraph{Other limitations and future directions} Our experiments deliberately focused on binary outcomes, standardized the underlying ML predictors, and, for PAI and SAI, the acquisition heuristics, to facilitate interpretable comparisons. However, a growing literature investigates optimized and robust acquisition policies for Active Statistical Inference \citep{ma2026optimized,li2026robust}, and alternative ML models and online updating procedures could yield different method-specific relative efficiencies. Rather than extensively tuning each method's acquisition policy and ML model, our goal was to understand how effectively the competing inferential procedures leverage comparable predictive information. Exploring alternative, potentially domain-specific acquisition policies and online learning strategies for PAI therefore constitutes an important direction for future work.

Furthermore, the severe undercoverage observed on the ULB dataset highlights the challenges of ML-assisted adaptive inference for rare events at finite labeling budgets. Other promising directions include extending our empirical investigations to continuous outcomes and superpopulation settings, developing PAI for more general estimands such as M-estimation, and exploring further domain-specific applications.

More broadly, we hope this work provides a foundation for developing statistically reliable, label-efficient inference methods that make effective use of the increasingly abundant ML predictions and covariate information available in modern applications.

\subsubsection*{Acknowledgments}

We thank Kevin Luo, Alexandru Lopotenco, Etaash Katiyar, Valérie Ho, Noah Cowan, and Ginnie Ma for their many insightful comments, objective outside advice, and helpful discussions. S.W. and Y.N. were both supported by the National Science Foundation Graduate Research Fellowship Program. E.J.C. was supported by the Office of Naval Research grant N000014-20-1-2157. We also gratefully acknowledge support from Arena Intelligence's Academic Partnerships Program. This work used Marlowe \citep{marlowe2025}, Stanford University's GPU-based Computational Instrument, supported by Stanford HAI and Stanford Research Computing. We thank Stanford University and the Stanford Research Computing Center for providing the computational resources and support that contributed to these research results. 

\bibliography{references}
\bibliographystyle{tmlr}

\newpage
\appendix

\section{Proofs and other formulas}

\subsection{Regularity assumptions for Theorem \ref{thm:pai-asymptotic-distribution}}
\label{regularity-assumptions:pai-thm}

We enumerate some mild regularity conditions for Theorem \ref{thm:pai-asymptotic-distribution}: Assumptions \ref{a1:variance_stabilization}-\ref{a2:lindeberg_condition} are standard conditions for martingale central limit theorems (c.f.~\citet[Corollary 3.2]{hall2014martingale}) while Assumption \ref{a3:4th-moment-control} is mild and used to facilitate consistency of our variance estimator. Define $\Delta_t := \phi_t - \theta$.

\begin{assumption}[Variance stabilization]
\label{a1:variance_stabilization}
    The average conditional variance converges in probability to a positive deterministic limit $\sigma^2$, which does not depend on $m$:
    \begin{align*}
        \frac{1}{n} \sum_{t=1}^{n} \Var\left( \Delta_t \mid \F_{t-1} \right) \toP \sigma^2.
    \end{align*}
\end{assumption}

\begin{assumption}[Lindeberg condition]
\label{a2:lindeberg_condition}
    The triangular array of problem instances in conjunction with the sampling policies $q_t(\cdot)$ obey the Lindeberg condition. Specifically, for every $\epsilon > 0$,
    \begin{align*}
        \frac{1}{n} \sum_{t=1}^n \mathbb{E}\left[ \Delta^2_t \bdone\left( |\Delta_t| > \epsilon \sqrt{n} \right) \mid \F_{t-1} \right] \toP 0.
    \end{align*}
\end{assumption}

\begin{assumption}[Uniform fourth-moment control for weighted residuals]
\label{a3:4th-moment-control}
    There exists a deterministic constant $C < \infty$ such that
    \begin{align*}
        \sup_{m \geq 1} \sup_{1 \leq t \leq n_m} \mathbb{E}\left[ \left( \frac{1}{N} \frac{y_{I_t} - \fhat^{(t-1)}(x_{I_t})}{q_t(I_t)} \right)^4 \right] \leq C.
    \end{align*}
\end{assumption}

\subsection{Proof of Theorem \ref{thm:pai-asymptotic-distribution}}
\label{proof:pai-thm}

\begin{proof}
    We begin by showing that the $\Delta_t$ are martingale differences---that their cumulative sums form a martingale. By definition of $\F_{t-1}$ and $q_t(\cdot) := P(I_t = \cdot \mid \F_{t-1})$, with division by $q_t(\cdot)$ well-defined by assumption,
    \begin{align*}
        \mathbb{E}\left[ \Delta_t \mid \F_{t-1} \right] = \mathbb{E}\left[ \frac{1}{N}\left( \sum_{i \in O_{t-1}} y_i + \sum_{i \notin O_{t-1}} \hat{f}^{(t-1)}(x_{i}) \right) + \frac{1}{N}\frac{y_{I_t} - \hat{f}^{(t-1)}(x_{I_t})}{q_t(I_t)} - \theta \mid \F_{t-1}\right]\\
        = \frac{1}{N} \sum_{i \in O_{t-1}} y_i + \frac{1}{N}\left( \sum_{i \notin O_{t-1}} \hat{f}^{(t-1)}(x_{i}) \right) + \mathbb{E}\left[\frac{1}{N}\frac{y_{I_t} - \hat{f}^{(t-1)}(x_{I_t})}{q_t(I_t)} \mid \F_{t-1} \right] - \theta \\
        = \frac{1}{N} \sum_{i \in O_{t-1}} y_i + \frac{1}{N}\left( \sum_{i \notin O_{t-1}} \hat{f}^{(t-1)}(x_{i}) \right) + \frac{1}{N} \sum_{i \notin O_{t-1}} \left( \frac{y_i - \fhat^{(t-1)}(x_i)}{q_t(i)} \right) q_t(i) - \theta\\
        = \frac{1}{N} \sum_{i \in \mathcal{I}} y_i - \theta = 0,
    \end{align*}
    which implies that $(\Delta_t, \F_t)$ is a martingale difference array. Now, it immediately follows that
    \begin{align*}
        \mathbb{E}\left[ \thetahat_n - \theta \right] = \mathbb{E}\left[ \frac{1}{n}\sum_{t=1}^n \phi_t - \theta \right] = \mathbb{E}\left[ \frac{1}{n}\sum_{t=1}^n \left( \phi_t - \theta \right) \right] = \mathbb{E}\left[ \frac{1}{n}\sum_{t=1}^n \Delta_t \right] = 0,
    \end{align*}
    implying that $\thetahat_n$ is unbiased for $\theta$ at every finite $n$. Now, by Assumptions \ref{a1:variance_stabilization}-\ref{a2:lindeberg_condition}, the martingale central limit theorem as presented in Corollary 3.2 of \cite{hall2014martingale} tells us that
    \begin{align*}
        \sqrt{n}\left( \thetahat_n - \theta \right) \toD \mathcal{N}\left( 0, \sigma^2 \right).
    \end{align*}
    To compute a consistent estimator for $\sigma^2$, we proceed by computing the martingale increment conditional variances. By definition of $\F_{t-1}$, we have
    \begin{align*}
        \Var\left( \Delta_t \mid \F_{t-1} \right)\\
        = \Var\left( \frac{1}{N}\left( \sum_{i \in O_{t-1}} y_i + \sum_{i \notin O_{t-1}} \hat{f}^{(t-1)}(x_{i}) \right) + \frac{1}{N}\frac{y_{I_t} - \hat{f}^{(t-1)}(x_{I_t})}{q_t(I_t)} - \theta \mid \F_{t-1} \right)\\
        = \Var\left( \frac{1}{N}\frac{y_{I_t} - \hat{f}^{(t-1)}(x_{I_t})}{q_t(I_t)} \mid \F_{t-1} \right)\\ = \mathbb{E}\left[ \left( \frac{1}{N}\frac{y_{I_t} - \hat{f}^{(t-1)}(x_{I_t})}{q_t(I_t)} \right)^2 \mid \F_{t-1} \right] - \left(  \mathbb{E}\left[ \frac{1}{N}\frac{y_{I_t} - \hat{f}^{(t-1)}(x_{I_t})}{q_t(I_t)} \mid \F_{t-1} \right] \right)^2\\
        = \frac{1}{N^2} \sum_{i \notin O_{t-1}} \frac{\left( y_i - \fhat^{(t-1)}(x_i) \right)^2}{q_t(i)} -  \frac{1}{N^2} \left( \sum_{i \notin O_{t-1}} \left( y_i - \fhat^{(t-1)}(x_i) \right) \right)^2.
    \end{align*}
    By Assumption \ref{a1:variance_stabilization} and substitution, we know that
    \begin{align*}
        \frac{1}{n} \sum_{t=1}^n \Var(\Delta_t \mid \F_{t-1}) = \frac{1}{n} \sum_{t=1}^n \left( \frac{1}{N^2} \sum_{i \notin O_{t-1}} \frac{\left( y_i - \fhat^{(t-1)}(x_i) \right)^2}{q_t(i)} -  \frac{1}{N^2} \left( \sum_{i \notin O_{t-1}} \left( y_i - \fhat^{(t-1)}(x_i) \right) \right)^2 \right) \\
        = \frac{1}{n N^2} \sum_{t=1}^n \left( \sum_{i \notin O_{t-1}} \frac{\left( y_i - \fhat^{(t-1)}(x_i) \right)^2}{q_t(i)} \right) - \frac{1}{n N^2} \sum_{t=1}^n \left( \sum_{i \notin O_{t-1}} \left( y_i - \fhat^{(t-1)}(x_i) \right) \right)^2\\
        := A_n - B_n \toP \sigma^2.
    \end{align*}
    To show that $\sigmahat_n^2 \toP \sigma^2$, it suffices to show that (a) $\hat{A}_n - A_n = o_p(1)$ and (b) $\hat{B}_n - B_n = o_p(1)$. Beginning with the first statement,
    \begin{align*}
        \hat{A}_n - A_n = \frac{1}{nN^2} \sum_{t=1}^n \frac{\left( y_{I_t} - \hat{f}^{(t-1)}\left( x_{I_t} \right) \right)^2}{q_t(I_t)^2} - \frac{1}{n N^2} \sum_{t=1}^n \left( \sum_{i \notin O_{t-1}} \frac{\left( y_i - \fhat^{(t-1)}(x_i) \right)^2}{q_t(i)} \right)\\
        = \frac{1}{n} \sum_{t=1}^n \frac{1}{N^2}\left( \frac{\left( y_{I_t} - \hat{f}^{(t-1)}\left( x_{I_t} \right) \right)^2}{q_t(I_t)^2} - \sum_{i \notin O_{t-1}} \frac{\left( y_i - \fhat^{(t-1)}(x_i) \right)^2}{q_t(i)} \right) := \frac{1}{n} \sum_{t=1}^n C_t. 
    \end{align*}
    Because we have
    \begin{align*}
        \mathbb{E}[C_t \mid \F_{t-1}] = \mathbb{E}\left[ \frac{1}{N^2}\left( \frac{\left( y_{I_t} - \hat{f}^{(t-1)}\left( x_{I_t} \right) \right)^2}{q_t(I_t)^2} - \sum_{i \notin O_{t-1}} \frac{\left( y_i - \fhat^{(t-1)}(x_i) \right)^2}{q_t(i)} \right) \mid \F_{t-1} \right]\\
        = \mathbb{E}\left[ \frac{1}{N^2}\left( \frac{\left( y_{I_t} - \hat{f}^{(t-1)}\left( x_{I_t} \right) \right)^2}{q_t(I_t)^2} \right) \mid \F_{t-1} \right] - \frac{1}{N^2} \left( \sum_{i \notin O_{t-1}} \frac{\left( y_i - \fhat^{(t-1)}(x_i) \right)^2}{q_t(i)} \right)\\
        = \sum_{i \notin O_{t-1}} \frac{1}{N^2}\left( \frac{\left( y_{i} - \hat{f}^{(t-1)}\left( x_{i} \right) \right)^2}{q_t(i)^2} \right) q_t(i) - \frac{1}{N^2} \left( \sum_{i \notin O_{t-1}} \frac{\left( y_i - \fhat^{(t-1)}(x_i) \right)^2}{q_t(i)} \right) = 0,
    \end{align*}
    we can deduce that $(C_t, \F_t)$ is a martingale difference array. By orthogonality of martingale increments (i.e., tower property), we further find that
    \begin{align*}
        \mathbb{E}\left[ \left( \hat{A}_n - A_n \right)^2 \right] = \mathbb{E}\left[ \left( \frac{1}{n} \sum_{t=1}^n C_t \right)^2 \right] = \frac{1}{n^2} \sum_{t=1}^n \mathbb{E}\left[ C_t^2 \right].
    \end{align*}
    Now, because $(a-b)^2 \leq 2a^2 + 2b^2$,
    \begin{align*}
        \mathbb{E}\left[ C_t^2 \right] \leq \mathbb{E}\left[ 2 \left( \frac{1}{N^2} \frac{\left( y_{I_t} - \hat{f}^{(t-1)}\left( x_{I_t} \right) \right)^2}{q_t(I_t)^2} \right)^2 + 2 \left( \frac{1}{N^2} \sum_{i \notin O_{t-1}} \frac{\left( y_i - \fhat^{(t-1)}(x_i) \right)^2}{q_t(i)}\right)^2\right]\\
        = 2 \mathbb{E}\left[ \left( \frac{1}{N} \frac{\left( y_{I_t} - \hat{f}^{(t-1)}\left( x_{I_t} \right) \right)}{q_t(I_t)} \right)^4\right] + 2 \mathbb{E}\left[ \left( \mathbb{E}\left[ \frac{1}{N^2} \frac{\left( y_{I_t} - \fhat^{(t-1)}(x_{I_t}) \right)^2}{q_t({I_t})^2} \mid \F_{t-1} \right]\right)^2 \right]\\
        = 2 \mathbb{E}\left[ \left( \frac{1}{N} \frac{\left( y_{I_t} - \hat{f}^{(t-1)}\left( x_{I_t} \right) \right)}{q_t(I_t)} \right)^4\right] + 2 \mathbb{E}\left[ \mathbb{E}\left[ \left( \frac{1}{N} \frac{\left( y_{I_t} - \fhat^{(t-1)}(x_{I_t}) \right)}{q_t({I_t})} \right)^4 \mid \F_{t-1} \right] \right]\\
        = 4 \mathbb{E}\left[ \left( \frac{1}{N} \frac{\left( y_{I_t} - \hat{f}^{(t-1)}\left( x_{I_t} \right) \right)}{q_t(I_t)} \right)^4\right] \leq 4C,
    \end{align*}
    with the last step by Assumption \ref{a3:4th-moment-control}. As such, we deduce that $\mathbb{E}\left[ \left( \hat{A}_n - A_n \right)^2 \right] \to 0$, which implies that $\hat{A}_n - A_n = o_p(1)$. Working on the second term, we first observe that
    \begin{align*}
        B_n = \frac{1}{n N^2} \sum_{t=1}^n \left( \sum_{i \notin O_{t-1}} \left( y_i - \fhat^{(t-1)}(x_i) \right) \right)^2 = \frac{1}{n N^2} \sum_{t=1}^n \left( N\theta - \sum_{i \in O_{t-1}} y_i - \sum_{i \notin O_{t-1}} \fhat^{(t-1)}(x_i)  \right)^2\\
        = \frac{1}{n N^2} \sum_{t=2}^n \left( N\theta - \sum_{i \in O_{t-1}} y_i - \sum_{i \notin O_{t-1}} \fhat^{(t-1)}(x_i)  \right)^2 + \frac{1}{nN^2} \left( N\theta - \sum_{i=1}^N \fhat^{(t-1)}(x_i) \right)^2.
    \end{align*}
    With this alternative form, and using the equality that $a^2 - b^2 = (a - b)^2 + 2b(a-b)$ to combine similar terms, we have the following:
    \begin{align*}
        \Bhat_n - B_n = \frac{1}{nN^2} \sum_{t=2}^n \left( N\thetahat_{t-1} - \sum_{i \in O_{t-1}} y_i - \sum_{i \notin O_{t-1}} \hat{f}^{(t-1)}(x_i) \right)^2\\ - \frac{1}{n N^2} \sum_{t=2}^n \left( N\theta - \sum_{i \in O_{t-1}} y_i - \sum_{i \notin O_{t-1}} \fhat^{(t-1)}(x_i)  \right)^2 - \frac{1}{nN^2} \left( N\theta - \sum_{i=1}^N \fhat^{(0)}(x_i) \right)^2\\
        = \frac{1}{nN^2} \sum_{t=2}^n \left[ \left( N\thetahat_{t-1} - \sum_{i \in O_{t-1}} y_i - \sum_{i \notin O_{t-1}} \hat{f}^{(t-1)}(x_i) \right)^2 - \left( N\theta - \sum_{i \in O_{t-1}} y_i - \sum_{i \notin O_{t-1}} \fhat^{(t-1)}(x_i)  \right)^2 \right]\\ - \frac{1}{nN^2} \left( N\theta - \sum_{i=1}^N \fhat^{(0)}(x_i) \right)^2\\
        = \frac{1}{n} \sum_{t=2}^n \left( \thetahat_{t-1} - \theta \right)^2 + \frac{2}{n} \sum_{t=2}^n \left( \theta - \frac{\sum_{i \in O_{t-1}} y_i + \sum_{i \notin O_{t-1}} \fhat^{(t-1)}(x_i)}{N} \right) \left( \thetahat_{t-1} - \theta \right)\\ - \frac{1}{n} \left( \theta - \frac{\sum_{i=1}^N \fhat^{(0)}(x_i)}{N} \right)^2.
    \end{align*}
    Taking absolute values, we have
    \begin{align*}
        |\Bhat_n - B_n| \leq \underbrace{\frac{1}{n} \sum_{t=2}^n \left( \thetahat_{t-1} - \theta \right)^2}_{T_1} + \underbrace{\frac{2}{n} \sum_{t=2}^n \left| \theta - \frac{\sum_{i \in O_{t-1}} y_i + \sum_{i \notin O_{t-1}} \fhat^{(t-1)}(x_i)}{N} \right| \left| \thetahat_{t-1} - \theta \right|}_{T_2}\\ + \underbrace{\frac{1}{n} \left( \theta - \frac{\sum_{i=1}^N \fhat^{(0)}(x_i)}{N} \right)^2}_{T_3}.
    \end{align*}
    Working on $T_1$, we have the following because the $\Delta_s$ are martingale differences and by orthogonality of martingale increments:
    \begin{align*}
        \thetahat_{t-1} - \theta = \frac{1}{t-1} \sum_{s=1}^{t-1} \Delta_s \implies \mathbb{E}\left[ \left(  \thetahat_{t-1} - \theta \right)^2\right] = \frac{1}{(t-1)^2} \sum_{s=1}^{t-1} \mathbb{E}\left[ \Delta_s^2 \right].
    \end{align*}
    Now, by Jensen's Inequality, the fact that $(a - b)^2 \leq 2a^2 + 2b^2$, and the following fact:
    \begin{align*}
        \mathbb{E}\left[ \frac{1}{N}\frac{y_{I_t} - \hat{f}^{(t-1)}(x_{I_t})}{q_t(I_t)} \mid \F_{t-1} \right] = \frac{1}{N} \sum_{i \notin O_{t-1}} \left( y_i - \fhat^{(t-1)}(x_i) \right),
    \end{align*}
    we have the following, noting that $\theta - \frac{1}{N} \sum_{i \in O_{s-1}} y_i = \frac{1}{N} \sum_{i \notin O_{s-1}} y_i$:
    \begin{align*}
        \mathbb{E}\left[ \Delta_s^2 \right] = \mathbb{E}\left[ \left( \frac{1}{N}\left( \sum_{i \in O_{s-1}} y_i + \sum_{i \notin O_{s-1}} \hat{f}^{(s-1)}(x_{i}) \right) + \frac{1}{N}\frac{y_{I_s} - \hat{f}^{(s-1)}(x_{I_s})}{q_s(I_s)} - \theta \right)^2 \right]\\
        \leq 2\mathbb{E}\left[ \left( \frac{1}{N}\frac{y_{I_s} - \hat{f}^{(s-1)}(x_{I_s})}{q_s(I_s)} \right)^2 \right] + 2\mathbb{E}\left[ \left( \theta - \frac{1}{N}\left( \sum_{i \in O_{s-1}} y_i + \sum_{i \notin O_{s-1}} \hat{f}^{(s-1)}(x_{i}) \right) \right)^2 \right]\\
        \leq 2\left(\mathbb{E}\left[ \left( \frac{1}{N}\frac{y_{I_s} - \hat{f}^{(s-1)}(x_{I_s})}{q_s(I_s)} \right)^4 \right]\right)^{1/2} + 2\mathbb{E}\left[ \left( \frac{1}{N} \sum_{i \notin O_{s-1}} y_i - \frac{1}{N} \sum_{i \notin O_{s-1}} \hat{f}^{(s-1)}(x_{i}) \right)^2 \right]\\
        = 2\left(\mathbb{E}\left[ \left( \frac{1}{N}\frac{y_{I_s} - \hat{f}^{(s-1)}(x_{I_s})}{q_s(I_s)} \right)^4 \right]\right)^{1/2} + 2\mathbb{E}\left[ \left(  \mathbb{E}\left[ \frac{1}{N}\frac{y_{I_s} - \hat{f}^{(s-1)}(x_{I_s})}{q_s(I_s)} \mid \F_{s-1} \right] \right)^2 \right]\\
        \leq 2\left(\mathbb{E}\left[ \left( \frac{1}{N}\frac{y_{I_s} - \hat{f}^{(s-1)}(x_{I_s})}{q_s(I_s)} \right)^4 \right]\right)^{1/2} + 2\mathbb{E}\left[ \mathbb{E}\left[ \left( \frac{1}{N}\frac{y_{I_s} - \hat{f}^{(s-1)}(x_{I_s})}{q_s(I_s)} \right)^2 \mid \F_{s-1} \right] \right]\\
        \leq 2\left(\mathbb{E}\left[ \left( \frac{1}{N}\frac{y_{I_s} - \hat{f}^{(s-1)}(x_{I_s})}{q_s(I_s)} \right)^4 \right]\right)^{1/2} + 2\mathbb{E}\left[ \left( \frac{1}{N}\frac{y_{I_s} - \hat{f}^{(s-1)}(x_{I_s})}{q_s(I_s)} \right)^2 \right]\\
        \leq 2\left(\mathbb{E}\left[ \left( \frac{1}{N}\frac{y_{I_s} - \hat{f}^{(s-1)}(x_{I_s})}{q_s(I_s)} \right)^4 \right]\right)^{1/2} + 2\left(\mathbb{E}\left[ \left( \frac{1}{N}\frac{y_{I_s} - \hat{f}^{(s-1)}(x_{I_s})}{q_s(I_s)} \right)^4 \right] \right)^{1/2} \leq 4C^{1/2}.
    \end{align*}
    Thus, we conclude that
    \begin{align*}
        \mathbb{E}[T_1] = \frac{1}{n}\sum_{t=2}^n \mathbb{E}\left[ \left( \thetahat_{t-1} - \theta \right)^2 \right] = \frac{1}{n}\sum_{t=2}^n \frac{1}{(t-1)^2} \sum_{s=1}^{t-1} \mathbb{E}\left[ \Delta_s^2 \right] \leq \frac{1}{n}\sum_{t=2}^n \frac{4C^{1/2}}{t-1} = O\left( \frac{\log n}{n} \right).
    \end{align*}
    Moving on to $T_3$, we note that for any $t$, by definition of $\theta$,
    \begin{align*}
        \theta - \frac{1}{N} \sum_{i \in O_{t-1}} y_i - \frac{1}{N} \sum_{i \notin O_{t-1}} \fhat^{(t-1)}(x_i) = \frac{1}{N} \sum_{i \notin O_{t-1}} \left( y_i - \fhat^{(t-1)}(x_i) \right)\\ = \mathbb{E}\left[ \frac{1}{N}\frac{y_{I_t} - \hat{f}^{(t-1)}(x_{I_t})}{q_t(I_t)} \mid \F_{t-1} \right].
    \end{align*}
    By Jensen's inequality, we deduce that for any $t$,
    \begin{align*}
        \mathbb{E}\left[ \left( \theta - \frac{1}{N} \sum_{i \in O_{t-1}} y_i - \frac{1}{N} \sum_{i \notin O_{t-1}} \fhat^{(t-1)}(x_i) \right)^2 \right] = \mathbb{E}\left[ \left( \mathbb{E}\left[ \frac{1}{N}\frac{y_{I_t} - \hat{f}^{(t-1)}(x_{I_t})}{q_t(I_t)} \mid \F_{t-1} \right] \right)^2 \right]\\
        \leq \mathbb{E}\left[ \mathbb{E}\left[ \left( \frac{1}{N}\frac{y_{I_t} - \hat{f}^{(t-1)}(x_{I_t})}{q_t(I_t)} \right)^2 \mid \F_{t-1} \right] \right] = \mathbb{E}\left[ \left( \frac{1}{N}\frac{y_{I_t} - \hat{f}^{(t-1)}(x_{I_t})}{q_t(I_t)} \right)^2 \right]\\ \leq \left( \mathbb{E}\left[ \left( \frac{1}{N}\frac{y_{I_t} - \hat{f}^{(t-1)}(x_{I_t})}{q_t(I_t)} \right)^4 \right] \right)^{1/2} \leq C^{1/2}.
    \end{align*}
    Thus, it also follows that, corresponding to $t=1$,
    \begin{align*}
        \mathbb{E}[T_3] = \mathbb{E}\left[ \frac{1}{n} \left( \theta - \frac{\sum_{i=1}^N \fhat^{(0)}(x_i)}{N} \right)^2 \right] \leq \frac{C^{1/2}}{n} = O\left( \frac{1}{n} \right).
    \end{align*}
    Now, working on $T_2$, by Cauchy-Schwarz, and using our above facts,
    \begin{align*}
        \mathbb{E}[T_2] = \frac{2}{n} \sum_{t=2}^n \mathbb{E}\left[ \left| \theta - \frac{\sum_{i \in O_{t-1}} y_i + \sum_{i \notin O_{t-1}} \fhat^{(t-1)}(x_i)}{N} \right| \left| \thetahat_{t-1} - \theta \right| \right]\\
        \leq \frac{2}{n} \sum_{t=2}^n \mathbb{E}\left[ \left( \theta - \frac{\sum_{i \in O_{t-1}} y_i + \sum_{i \notin O_{t-1}} \fhat^{(t-1)}(x_i)}{N} \right)^2 \right]^{1/2} \mathbb{E}\left[ \left( \thetahat_{t-1} - \theta \right)^2 \right]^{1/2}\\
        \leq \frac{2}{n}\sum_{t=2}^n \left( C^{1/2}\right)^{1/2}\left( \frac{4C^{1/2}}{t-1} \right)^{1/2} = \frac{4C^{1/2}}{n} \sum_{t=2}^n \frac{1}{\sqrt{t-1}} = O\left(  \frac{1}{\sqrt{n}} \right).
    \end{align*}
    Putting everything together, we deduce that
    \begin{align*}
        \mathbb{E}[|\Bhat_n - B_n|] \leq \mathbb{E}[T_1] + \mathbb{E}[T_2] + \mathbb{E}[T_3] = O\left( \frac{\log n}{n} \right) + O\left(  \frac{1}{\sqrt{n}} \right) + O\left( \frac{1}{n} \right) \to 0,
    \end{align*}
    which implies that $\Bhat_n \overset{L^1}{\to} B_n$, which implies $\Bhat_n - B_n = o_p(1)$. Thus, we conclude that $\sigmahat_n^2 \toP \sigma^2$, which by Slutsky's gives us our desired final result:
    \begin{align*}
        \sqrt{n}\left( \thetahat_n - \theta \right) / \sigmahat_{n} \toD \mathcal{N}\left( 0, 1\right).
    \end{align*}
\end{proof}

\subsection{Derivation and properties of the improved finite-sample variance estimator}
\label{appendix:bias-corrected-variance-estimator}

In this section, we motivate, derive, and provide commentary on the improved finite-sample variance estimator $\sigmahat^2_{n,\text{cor}}$ introduced in Section \ref{sec:bias-corrected-variance-estimator}.

\subsubsection{Inspirations from second-order Taylor expansions and the ABC and BCa bootstraps}

Before starting our derivation, recall from the proof of Theorem \ref{thm:pai-asymptotic-distribution} in Appendix \ref{proof:pai-thm} that we assumed the following variance stabilization to facilitate our martingale CLT:
\begin{align*}
    \frac{1}{n} \sum_{t=1}^n \Var(\Delta_t \mid \F_{t-1}) := A_n - B_n \toP \sigma^2,
\end{align*}
where $\Ahat_n$ and $\Bhat_n$ are the plug-in estimates for $A_n$ and $B_n$, respectively. From the aforementioned proof, we know that $\Ahat_n$ is unbiased for $A_n$ because $A_n - \Ahat_n$ is a martingale sequence. Thus, we elect to bias-correct by working on $\Bhat_n$, which is biased for $B_n$ because we plugged in $\thetahat_{t-1}$ in place of $\theta$. Specifically, we proceed to derive a correction term $\Chat_n$ via a second-order Taylor expansion $\Bhat_n - B_n$ around $\theta$, inspired philosophically by the ABC and BCa bootstrap methods (c.f.~\cite{efron1987better,diciccio1992more}). From the aforementioned proof, we know that
\begin{align*}
        \Bhat_n - B_n &= \frac{1}{nN^2} \sum_{t=2}^n \left( N\thetahat_{t-1} - \sum_{i \in O_{t-1}} y_i - \sum_{i \notin O_{t-1}} \hat{f}^{(t-1)}(x_i) \right)^2\\ 
        &- \frac{1}{n N^2} \sum_{t=2}^n \left( N\theta - \sum_{i \in O_{t-1}} y_i - \sum_{i \notin O_{t-1}} \fhat^{(t-1)}(x_i)  \right)^2\\
        &- \frac{1}{nN^2} \left( N\theta - \sum_{i=1}^N \fhat^{(0)}(x_i) \right)^2.
\end{align*}
Now, for $t=1,\ldots, n$, define $\psi_t := \frac{1}{N}\left( \sum_{i \in O_{t-1}} y_i + \sum_{i \notin O_{t-1}} \hat{f}^{(t-1)}(x_i) \right)$ and $g_t(\theta) = (\theta - \psi_t)^2$. Then, by substitution, we have the following, where $g_1(\theta)$ can be interpreted as a boundary term:
\begin{align*}
    \Bhat_n - B_n = \frac{1}{n} \sum_{t=2}^n \left( g_t(\thetahat_{t-1}) - g_t(\theta) \right) - \frac{1}{n} g_1(\theta).
\end{align*}
Taylor-expanding each $g_t$ around $\theta$, we have, after noting that $g'_t(\theta) = 2(\theta - \psi_t)$ and that $g''_t(\theta) = 2$, the following exact expression as the $g_t$ were originally quadratic:
\begin{align*}
    g_t(\thetahat_{t-1}) &= g_t(\theta) + g'_t(\theta)(\thetahat_{t-1} - \theta) + \frac{1}{2} g''_t(\theta)(\thetahat_{t-1} - \theta)^2\\
    &= g_t(\theta) + 2(\theta - \psi_t)(\thetahat_{t-1} - \theta) + (\thetahat_{t-1} - \theta)^2.
\end{align*}
Substituting, we have the following, with our boundary term having no $\thetahat_{t-1}$ analog:
\begin{align*}
    \Bhat_n - B_n = \underbrace{\frac{1}{n} \sum_{t=2}^n (\thetahat_{t-1} - \theta)^2}_{\text{second-order Taylor curvature term}} + \underbrace{\frac{1}{n} \sum_{t=2}^n 2(\theta - \psi_t)(\thetahat_{t-1} - \theta)}_{\text{first-order Taylor term}} - \underbrace{\frac{1}{n} g_1(\theta)}_{\text{boundary term}}.
\end{align*}
Because the purpose of a second-order Taylor correction is to correct the second-order curvature term, we thus explore a correction $\Chat_n$ that well-approximates $\frac{1}{n} \sum_{t=2}^n (\thetahat_{t-1} - \theta)^2$. In particular, we do not attempt to bias-correct the first-order Taylor term because it is (a) sign-indefinite; (b) generally not safely estimable without plugging in $\thetahat$ for $\theta$ (and risk introducing another layer of finite-sample bias); and (c) better treated as a component of the stochastic error.

\subsubsection{Constructing the correction}

Recall our notation in Theorem \ref{thm:pai-asymptotic-distribution} of
\begin{align*}
    \Delta_t := \phi_t - \theta, \quad \text{with} \quad \phi_t := \frac{1}{N}\left( \sum_{i \in O_{t-1}} y_i + \sum_{i \notin O_{t-1}} \hat{f}^{(t-1)}(x_{i}) \right) + \frac{1}{N}\frac{y_{I_t} - \hat{f}^{(t-1)}(x_{I_t})}{q_t(I_t)},
\end{align*}
recalling that $\thetahat_{t-1} := \frac{1}{t-1} \sum_{s=1}^{t-1} \phi_s$. Since $\thetahat_{t-1} = \theta + \frac{1}{t-1}\sum_{s=1}^{t-1}\Delta_s$ and the $\Delta_s$ are martingale differences, by orthogonality we have the following expectation of the target second-order Taylor curvature:
\begin{align}
\label{eq:ET1-decomp}
    \mathbb{E}[\frac{1}{n} \sum_{t=2}^n (\thetahat_{t-1} - \theta)^2] = \frac{1}{n}\sum_{t=2}^n \frac{1}{(t-1)^2}\sum_{s=1}^{t-1}\mathbb{E}[v_s] = \frac{1}{n}\sum_{s=1}^{n-1}\mathbb{E}[v_s]\,w_s, \quad \text{with} \quad w_s := \sum_{k=s}^{n-1}\frac{1}{k^2},
\end{align}
where the second equality exchanges the order of summation, and reusing computations from our proof of Theorem \ref{thm:pai-asymptotic-distribution} in Appendix \ref{proof:pai-thm}, we have per-step conditional variances $v_s$ as follows:
\begin{align*}
    v_s &:= \Var(\Delta_s \mid \F_{s-1})\\
    &= \frac{1}{N^2} \sum_{i \notin O_{s-1}} \frac{\left( y_i - \fhat^{(s-1)}(x_i) \right)^2}{q_s(i)} -  \frac{1}{N^2} \left( \sum_{i \notin O_{s-1}} \left( y_i - \fhat^{(s-1)}(x_i) \right) \right)^2\\
    &= \frac{1}{N^2} \sum_{i \notin O_{s-1}} \frac{\left( y_i - \fhat^{(s-1)}(x_i) \right)^2}{q_s(i)} - \left( \theta - \frac{1}{N} \sum_{i \in O_{s-1}} y_i -  \frac{1}{N} \sum_{i \notin O_{s-1}} \fhat^{(s-1)}(x_i) \right)^2\\
    &= \frac{1}{N^2} \sum_{i \notin O_{s-1}} \frac{\left( y_i - \fhat^{(s-1)}(x_i) \right)^2}{q_s(i)} - (\theta - \psi_{s})^2.\\
    &= \frac{1}{N^2} \sum_{i \notin O_{s-1}} \frac{\left( y_i - \fhat^{(s-1)}(x_i) \right)^2}{q_s(i)} - g_s(\theta).
\end{align*}
Ideally, our oracle correction term would just be $\Chat_n^* = \frac{1}{n} \sum_{s=1}^{n-1} v_s w_s$, but it is clear from above that $v_s$ is unknown as it involves $\theta$. To find a data-driven proxy for the above $v_s$, we note that letting
\begin{align*}
    \zeta_s = \frac{1}{N} \frac{y_{I_s} - \fhat^{(s-1)}(x_{I_s})}{q_s(I_s)},
\end{align*}
we have
\begin{align*}
    \mathbb{E}[\zeta_s^2 \mid \F_{s-1}] = \sum_{i \notin O_{s-1}} \left( \frac{1}{N} \frac{y_{i} - \fhat^{(s-1)}(x_i)}{q_s(i)} \right)^2 P(I_s = i \mid \F_{s-1})\\
    = \sum_{i \notin O_{s-1}} \left( \frac{1}{N} \frac{y_{i} - \fhat^{(s-1)}(x_i)}{q_s(i)} \right)^2 q_s(i) = \frac{1}{N^2} \sum_{i \notin O_{s-1}} \frac{\left( y_i - \fhat^{(s-1)}(x_i) \right)^2}{q_s(i)}.
\end{align*}
Thus, our derived additive correction is
\begin{align*}
    \Chat_n := \frac{1}{n}\sum_{s=1}^{n-1} \left( \frac{1}{N} \frac{y_{I_s} - \fhat^{(s-1)}(x_{I_s})}{q_s(I_s)} \right)^2 w_s, \quad \text{with} \quad w_s := \sum_{k=s}^{n-1}\frac{1}{k^2}.
\end{align*}
Our bias-corrected variance estimator is thus $\sigmahat^2_{n,\text{cor}} := \Ahat_n - \Bhat_n + \Chat_n$. As a follow-up, we do not directly target $g_s(\theta)$ for correction. Although $g_s(\theta)$ appears in the analytical second-order curvature expression, estimating it would require an additional plug-in approximation for $\theta$, which risks introducing a second layer of finite-sample bias. We therefore use $\Chat_n$ as a feasible positive surrogate: it targets the directly estimable component of the curvature correction and preserves consistency (see below).

\subsubsection{Preserving consistency and asymptotic normality}

To prove that $\sigmahat^2_{n,\text{cor}} \toP \sigma^2$, where $\sigma^2$ is the limiting average increment variance that governs the martingale CLT in Theorem \ref{thm:pai-asymptotic-distribution}, by Slutsky's it suffices to prove that $\Chat_n \toP 0$, as we already know that $\Ahat_n - \Bhat_n \toP \sigma^2$ from the corresponding proof in Appendix \ref{proof:pai-thm}. By Jensen's Inequality and Assumption \ref{a3:4th-moment-control}, we know that
\begin{align*}
    \mathbb{E}[\Chat_n] = \frac{1}{n}\sum_{s=1}^{n-1} \mathbb{E}\left[ \left( \frac{1}{N} \frac{y_{I_s} - \fhat^{(s-1)}(x_{I_s})}{q_s(I_s)} \right)^2 \right] w_s \leq \frac{C^{1/2}}{n}\sum_{s=1}^{n-1} w_s = \frac{C^{1/2}}{n}\sum_{s=1}^{n-1} \sum_{k=s}^{n-1}\frac{1}{k^2} \to 0.
\end{align*}
Because $\Chat_n \geq 0$, it follows that $\mathbb{E}[|\Chat_n|] \to 0$, which implies that $\Chat_n \overset{L_1}{\to} 0$, which implies that $\Chat_n = o_p(1)$. Immediately by Slutsky's Theorem, under the assumptions of Theorem \ref{thm:pai-asymptotic-distribution}, we have
\begin{align*}
    \sqrt{n}(\thetahat_n - \theta) / \sigmahat_{n,\text{cor}} \toD \mathcal{N}(0, 1).
\end{align*}

\subsubsection{An MSE reduction argument for $\sigmahat^2_{n,\text{cor}}$}

We proceed to characterize when $\sigmahat^2_{n,\text{cor}}$ will have a lower MSE than $\sigmahat^2_n$. Expanding and substituting above, we have the following decomposition after adding and subtracting $A_n - B_n$:
\begin{align*}
    \sigmahat^2_n - \sigma^2 = \Ahat_n - \Bhat_n - \sigma^2 = (\Ahat_n - A_n) - (\Bhat_n - B_n) + (A_n - B_n - \sigma^2)\\
    = (\Ahat_n - A_n) - \left( \frac{1}{n} \sum_{t=2}^n (\thetahat_{t-1} - \theta)^2 + \frac{1}{n} \sum_{t=2}^n 2(\theta - \psi_t)(\thetahat_{t-1} - \theta) - \frac{1}{n} g_1(\theta) \right) + (A_n - B_n - \sigma^2)\\
    = \underbrace{(\Ahat_n - A_n) + (A_n - B_n - \sigma^2) - \frac{1}{n} \sum_{t=2}^n 2(\theta - \psi_t)(\thetahat_{t-1} - \theta) + \frac{1}{n} (\theta - \psi_1)^2}_{\text{common error $E_n$}} - \underbrace{\frac{1}{n} \sum_{t=2}^n (\thetahat_{t-1} - \theta)^2}_{\text{targeted error $T_n$}}.
\end{align*}
Reusing the same scaffolding, for $\sigmahat^2_{n,\text{cor}} = \Ahat_n - \Bhat_n + \Chat_n$, we have
\begin{align*}
    \sigmahat^2_{n,\text{cor}} - \sigma^2 = E_n - T_n + \Chat_n.
\end{align*}
Squaring and simplifying, we have
\begin{align*}
    \text{MSE}(\sigmahat^2_{n,\text{cor}}) - \text{MSE}(\sigmahat^2_n) =  \mathbb{E}\left[ \left( E_n - T_n + \Chat_n \right)^2 \right] - \mathbb{E}\left[ \left( E_n - T_n \right)^2 \right]\\
    = \mathbb{E}\left[ \left( \Chat_n - T_n \right)^2 \right] + 2\mathbb{E}\left[ \Chat_n E_n \right] - \mathbb{E}\left[T_n^2\right].
\end{align*}
Finally, we have the following necessary and sufficient criteria for when $\text{MSE}(\sigmahat^2_{n,\text{cor}}) \leq \text{MSE}(\sigmahat^2_n)$:
\begin{align*}
    \text{MSE}(\sigmahat^2_{n,\text{cor}}) \leq \text{MSE}(\sigmahat^2_n) \iff \underbrace{\frac{\mathbb{E}\left[ \left( \Chat_n - T_n \right)^2 \right]}{\mathbb{E}\left[T_n^2\right]}}_{\alpha_n} + \underbrace{\frac{2\mathbb{E}\left[ \Chat_n E_n \right]}{\mathbb{E}\left[T_n^2\right]}}_{\beta_n} \leq 1.
\end{align*}
In context, $\Chat_n$ is our data-driven plug-in estimate for $T_n$. In words, the result above tells us that so long as (a) $\Chat_n$ predicts $T_n$ better than had we used $0$ (i.e., $\alpha_n$ small); and (b) our correction term $\Chat_n$ is not excessively\footnote{Relative to the magnitude of $\mathbb{E}[T_n^2]$} adversarially positively correlated with the common remaining error term $E_n$ (i.e., $\beta_n$ small), then we should expect $\sigmahat^2_{n,\text{cor}}$ to have a lower MSE for $\sigma^2$ than $\sigmahat^2_n$.

\subsection{Proof of Proposition \ref{prop:superpopulation-asymptotic-distribution}}
\label{appendix:proof-of-superpopulation-asymptotic-distribution}

Let $\thetahat_n$ be our PAI estimator for $\theta_N$ from Equation \ref{eqn:pai-estimator-for-mean}. We first observe that
\begin{align*}
    \sqrt{n}\left( \thetahat_n - \theta \right) = \sqrt{n}(\thetahat_n - \theta_N) + \frac{\sqrt{n}}{\sqrt{N}} \sqrt{N}(\theta_N - \theta).
\end{align*}
For any $s, t$, we have the following, noting that $\theta_N$ is a deterministic function of $y_1, \ldots, y_N$. Formally, the limits here are taken with respect to $(n, N) \uparrow \infty$ in Theorem \ref{thm:pai-asymptotic-distribution}'s triangular array setup, letting $\F_N = \sigma\left( \{ (x_i, y_i) \}_{i=1}^N \right)$:
\begin{align*}
    \lim P\left( \sqrt{n}(\thetahat_n - \theta_N) \leq s, \sqrt{N}(\theta_N - \theta) \leq t \right) = \lim \mathbb{E}\left[ \bdone\left( \sqrt{n}(\thetahat_n - \theta_N) \leq s, \sqrt{N}(\theta_N - \theta) \leq t \right) \right]\\
    = \lim \mathbb{E}\left[ \bdone\left( \sqrt{n}(\thetahat_n - \theta_N) \leq s\right)\bdone\left( \sqrt{N}(\theta_N - \theta) \leq t \right) \right]\\
    = \lim \mathbb{E}\left[ \mathbb{E}\left[ \bdone\left( \sqrt{n}(\thetahat_n - \theta_N) \leq s\right)\bdone\left( \sqrt{N}(\theta_N - \theta) \leq t \right) \mid \F_N \right] \right]\\
    = \lim \mathbb{E}\left[ \bdone\left( \sqrt{N}(\theta_N - \theta) \leq t \right) \mathbb{E}\left[ \bdone\left( \sqrt{n}(\thetahat_n - \theta_N) \leq s\right) \mid \F_N \right] \right]\\
    = \Phi\left( \frac{s}{\sigma} \right) \lim \mathbb{E}\left[ \bdone\left( \sqrt{N}(\theta_N - \theta) \leq t \right) \right] = \Phi\left( \frac{s}{\sigma} \right) \Phi\left( \frac{t}{\tau} \right),
\end{align*}
where $\sigma^2$ is the limit of the average conditional variances of the $\Delta_t$ (c.f.~Theorem \ref{thm:pai-asymptotic-distribution} and Assumption \ref{a1:variance_stabilization}). The last two equalities come from the asymptotic normality of $\thetahat_n$ proven in Theorem \ref{thm:pai-asymptotic-distribution}, the Dominated Convergence Theorem, and the standard Central Limit Theorem for~i.i.d. random variables. Because the joint CDF factors into two independent terms, we deduce that
\begin{align*}
    \left( \sqrt{n}(\thetahat_n - \theta_N), \sqrt{N}(\theta_N - \theta)\right) \toD \mathcal{N}\left( \bdzero, \diag\left( \sigma^2, \tau^2 \right) \right).
\end{align*}
By the Continuous Mapping Theorem, it follows that
\begin{align}
\label{eqn:asymptotic-distribution-superpop-appendix}
    \sqrt{n}\left( \thetahat_n - \theta \right) \toD \mathcal{N}\left( 0, \sigma^2 + \rho \tau^2 \right),
\end{align}
where $\rho := \lim n/N$ from Theorem \ref{thm:pai-asymptotic-distribution}.

\subsection{Proof of Theorem \ref{thm:pai-bai-var}}
\label{appendix:proof-of-anti-bai-thm}

From the proof of Theorem \ref{thm:pai-asymptotic-distribution} in Appendix \ref{proof:pai-thm}, we have the following after substituting in PAI's oracle sampling rule, with the expectation necessary because we have to average over all random sampling paths, and permissible because of orthogonality of martingale increments:
\begin{align*}
    \Var(\thetahat_{\text{PAI}}) = \frac{1}{n^2 N^2} \sum_{t=1}^n \mathbb{E}\left[ \left( \sum_{i \notin O_{t-1}} \frac{\left( y_i - f(x_i) \right)^2}{q_t(i)} \right) - \left( \sum_{i \notin O_{t-1}} \left( y_i - f(x_i) \right) \right)^2 \right]\\
    = \frac{1}{n^2 N^2} \sum_{t=1}^n \mathbb{E}\left[ \left( \sum_{i \in R_{t-1}} \frac{r_i^2}{q_t(i)} \right) - \left( \sum_{i \in R_{t-1}} e_i \right)^2 \right]\\
    = \frac{1}{n^2 N^2} \sum_{t=1}^n \mathbb{E}\left[ U(R_{t-1})^2 - S(R_{t-1})^2 \right] := \frac{1}{n^2 N^2} \sum_{t=1}^n \mathbb{E}\left[ U(R_{t-1})^2\left(1 - b_t^2 \right) \right]\\
    = \frac{1}{n^2 N^2} \sum_{t=1}^n \mathbb{E}\left[ U(R_{t-1})^2\right] - \mathbb{E}\left[ U(R_{t-1})^2 b_t^2\right],
\end{align*}
if we define $b_t$ as capturing the \textit{net signed direction} of the residual mass:
\begin{align*}
    b_t = \frac{S(R_{t-1})}{U(R_{t-1})} = \frac{\sum_{i \in R_{t-1}} y_i - f(x_i)}{\sum_{i \in R_{t-1}} |y_i - f(x_i)|}.
\end{align*}
To formalize, define $\bbar^2$ to be the \textit{weighted-average (squared) net signed direction} of the residual mass over the $n$ sampling rounds (weighted by $U(R_{t-1})^2$):
\begin{align}
\label{eqn:bbar2}
    \bbar^2 := \frac{\sum_{t=1}^n \mathbb{E}\left[ U(R_{t-1})^2 b_t^2\right]}{\sum_{t=1}^n \mathbb{E}\left[ U(R_{t-1})^2 \right]}.
\end{align}
Then, we have
\begin{align*}
    \Var(\thetahat_{\text{PAI}}) = \frac{1}{n^2 N^2} \sum_{t=1}^n \mathbb{E}\left[ U(R_{t-1})^2\right] - \bbar^2 \mathbb{E}\left[ U(R_{t-1})^2 \right] = \frac{1 - \bbar^2}{n^2 N^2} \sum_{t=1}^n \mathbb{E}\left[ U(R_{t-1})^2\right] := (1 - \bbar^2) A_{\text{PAI}}.
\end{align*}
Working on the expectation term, letting $I_t$ represent the index sampled at time $t$, letting $Q(R) = \sum_{i \in R} r_i^2$ (``quadratic"), and letting $C(R) = \sum_{i \in R} r_i^3$ (``cubic"), we have by LOTE the following:
\begin{align*}
    \mathbb{E}\left[ U(R_{t})^2 \mid R_{t-1} \right] = \mathbb{E}[\left( U(R_{t-1}) - r_{I_{t}} \right)^2 \mid R_{t-1}] = \mathbb{E}\left[ U(R_{t-1})^2 - 2 r_{I_t} U(R_{t-1}) + r_{I_t}^2 \mid R_{t-1} \right]\\
    = U(R_{t-1})^2 - 2 U(R_{t-1}) \sum_{i \in R_{t-1}} \frac{r_i^2}{U(R_{t-1})} + \sum_{i \in R_{t-1}} \frac{r_i^3}{U(R_{t-1})} = U(R_{t-1})^2 - 2Q(R_{t-1}) + \frac{C(R_{t-1})}{U(R_{t-1})}.
\end{align*}
By Tower, we have the following recursion:
\begin{align*}
    \mathbb{E}\left[ U(R_{t})^2 \right] = \mathbb{E}[U(R_{t-1})^2] - \mathbb{E}\left[2 Q(R_{t-1}) - \frac{C(R_{t-1})}{U(R_{t-1})}\right]
\end{align*}
Applying the recursion $t$ times, and letting $R_0 = \mathcal{I}$, we have
\begin{align*}
    \mathbb{E}\left[ U(R_{t})^2 \right] = U(R_0)^2 - \sum_{s=0}^{t-1} \mathbb{E}\left[2 Q(R_{s}) - \frac{C(R_{s})}{U(R_{s})}\right],
\end{align*}
which tells us the following, after switching the second order of summation and reindexing:
\begin{align*}
    A_{\text{PAI}} = \frac{1}{n^2N^2} \sum_{t=1}^n \mathbb{E}[U(R_{t-1})^2] = \frac{1}{n^2N^2} \sum_{t=0}^{n-1} \mathbb{E}[U(R_{t})^2]\\
    = \frac{1}{n^2 N^2}\sum_{t=0}^{n-1} \left( U(R_0)^2 - \sum_{s=0}^{t-1} \mathbb{E}\left[2 Q(R_{s}) - \frac{C(R_{s})}{U(R_{s})}\right] \right)\\
    = \frac{1}{n N^2} U(R_0)^2 - \frac{1}{n^2 N^2} \sum_{s=0}^{n-2} (n - s - 1) \mathbb{E}\left[2 Q(R_{s}) - \frac{C(R_{s})}{U(R_{s})}\right].
\end{align*}
Now, because $Q(R_s) \leq Q(R_0)$ and because $C(R_s) / U(R_s) \geq 0$ for all $s$, we know by property of arithmetic sums that
\begin{align*}
    A_{\text{PAI}} \geq \frac{1}{n N^2} U(R_0)^2 - \frac{1}{n^2 N^2} \sum_{s=0}^{n-2} (n - s - 1) 2 Q(R_0)\\
    = \frac{1}{n N^2} U(R_0)^2 - \frac{ Q(R_0)(n-1)}{nN^2} = \frac{1}{N^2} \left( \frac{U(R_0)^2}{n} - \frac{n-1}{n} Q(R_0) \right).
\end{align*}

Computing BAI/SAI's variance (c.f. Equation \ref{eqn:bai-sai}), we have
\begin{align*}
    \Var(\thetahat_{\text{active}}) = \frac{1}{N^2} \sum_{t=1}^N \frac{(y_t - f(x_t))^2 (1- \pi_t(x_t))}{\pi_t(x_t)} = \frac{1}{N^2} \sum_{t=1}^N r_t^2 \left( \frac{1}{\pi_t(x_t)} - 1 \right).
\end{align*}

After substituting in SAI/BAI's oracle policy, we have:
\begin{align*}
    \Var(\thetahat_{\text{active}}) = \frac{1}{N^2} \sum_{t=1}^N r_t^2 \left( \frac{1}{\pi_t(x_t)} - 1 \right) = \frac{1}{N^2} \left( \frac{U(R_0)^2}{n} - Q(R_0) \right).
\end{align*}
This is very convenient, because it tells us that
\begin{align*}
    A_{\text{PAI}} \geq \Var(\thetahat_{\text{active}}) + \frac{Q(R_0)}{nN^2}.
\end{align*}
Working on an upper bound, recall that
\begin{align*}
    A_{\text{PAI}} = \frac{1}{n N^2} U(R_0)^2 - \frac{1}{n^2 N^2} \sum_{s=0}^{n-2} (n - s - 1) \mathbb{E}\left[2 Q(R_{s}) - \frac{C(R_{s})}{U(R_{s})}\right].
\end{align*}
Because only one unit is removed at each timestep, letting $r_{\text{max}} = \max_{i \in R_0} r_i$, we have
\begin{align*}
    Q(R_{s}) \geq Q(R_0) - sr^2_{\text{max}}.
\end{align*}
Moreover,
\begin{align*}
    \frac{C(R_{s})}{U(R_{s})} = \frac{\sum_{i \in R_s} r_i^3}{\sum_{i \in R_s} r_i} \leq \frac{r^2_{\text{max}} \sum_{i \in R_s} r_i}{\sum_{i \in R_s} r_i} = r^2_{\text{max}}.
\end{align*}
Thus, we have
\begin{align*}
    A_{\text{PAI}} \leq \frac{1}{n N^2} U(R_0)^2 - \frac{1}{n^2 N^2} \sum_{s=0}^{n-2} (n - s - 1) \left( 2Q(R_0) - (2s + 1)r^2_{\text{max}}\right)\\
    = \frac{1}{N^2} \left( \frac{U(R_0)^2}{n} - \frac{n-1}{n} Q(R_0) \right) + \frac{r^2_{\text{max}}}{n^2 N^2} \sum_{s=0}^{n-2} (n-s-1)(2s+1).
\end{align*}
Working on the last summation, we have
\begin{align*}
    \sum_{s=0}^{n-2} (n-s-1)(2s+1) = \sum_{s=0}^{n-2} (n-1)(2s+1) - s(2s+1)\\
    = (n-1)\sum_{s=0}^{n-2} (2s+1) - 2 \sum_{s=0}^{n-2} s^2 - \sum_{s=0}^{n-2} s = (n-1)^3 - \left( 2 \sum_{s=0}^{n-2} s^2 + \sum_{s=0}^{n-2} s\right) = \frac{n(n-1)(2n-1)}{6}.
\end{align*}
Thus, we conclude that
\begin{align*}
    A_{\text{PAI}} \leq \frac{1}{N^2} \left( \frac{U(R_0)^2}{n} - \frac{n-1}{n} Q(R_0) \right) + \frac{(n-1)(2n-1) r^2_{\text{max}}}{6 nN^2}\\
    = \Var(\thetahat_{\text{active}}) + \frac{Q(R_0)}{nN^2} + \frac{(n-1)(2n-1) r^2_{\text{max}}}{6 nN^2}.
\end{align*}
Putting everything together, we deduce that
\begin{align*}
    \Var(\thetahat_{\text{active}}) + \frac{Q(R_0)}{nN^2} \leq A_{\text{PAI}} \leq \Var(\thetahat_{\text{active}}) + \frac{Q(R_0)}{nN^2} + \frac{(n-1)(2n-1) r^2_{\text{max}}}{6 nN^2},
\end{align*}
which implies that
\begin{align*}
    (1 - \bbar^2) \left( \Var(\thetahat_{\text{active}}) + \frac{Q(R_0)}{nN^2} \right) \leq \Var(\thetahat_{\text{PAI}})\\ \leq (1 - \bbar^2) \left( \Var(\thetahat_{\text{active}}) + \frac{Q(R_0)}{nN^2} + \frac{(n-1)(2n-1) r^2_{\text{max}}}{6 nN^2} \right),
\end{align*}
or, more suggestively, recalling that (i) $Q(R_0) = \sum_{i=1}^N (y_i - f(x_i))^2$; (ii) $r_{\text{max}} = \max_{i \in [1:N]} |y_i - f(x_i)|$; and (iii) $\bbar^2$ is the weighted-average (squared) net signed direction of the residual mass over the $n$ sampling rounds (c.f. Equation \ref{eqn:bbar2}):
\begin{align*}
    \frac{1}{1 - \bbar^2} \cdot \frac{1}{1 + \frac{Q(R_0)}{nN^2 \Var(\thetahat_{\text{active}})} + \frac{(n-1)(2n-1) r^2_{\text{max}}}{6 nN^2 \Var(\thetahat_{\text{active}})}} \leq \frac{\Var(\thetahat_{\text{active}})}{\Var(\thetahat_{\text{PAI}})} \leq \frac{1}{1 - \bbar^2} \frac{1}{1 + \frac{Q(R_0)}{nN^2 \Var(\thetahat_{\text{active}})}},
\end{align*}
or, with one final substitution of $\Var(\thetahat_{\text{active}}) = \frac{1}{N^2} \left( \frac{U(R_0)^2}{n} - Q(R_0) \right)$,
\begin{align*}
    \frac{1}{1 - \bbar^2} \cdot \frac{1}{1 + \frac{Q(R_0)}{U(R_0)^2 - n Q(R_0)} + \frac{(n-1)(2n-1) r^2_{\text{max}}}{6 \left( U(R_0)^2 - n Q(R_0) \right)}} \leq \frac{\Var(\thetahat_{\text{active}})}{\Var(\thetahat_{\text{PAI}})} \leq \frac{1}{1 - \bbar^2} \frac{1}{1 + \frac{Q(R_0)}{U(R_0)^2 - n Q(R_0)}}.
\end{align*}
Formally, when $n r_{\text{max}} / U(R_0) \to 0$,\footnote{This is just one possible sufficient condition.} both remainder terms $\frac{Q(R_0)}{U(R_0)^2 - n Q(R_0)}$ and $\frac{(n-1)(2n-1) r^2_{\text{max}}}{6 \left( U(R_0)^2 - n Q(R_0) \right)}$ vanish asymptotically\footnote{This follows from algebra as $Q(R_0) \leq r_{\text{max}} U(R_0)$.}, yielding the following asymptotic relative-efficiency relationship:
\begin{align*}
    \frac{\Var(\thetahat_{\text{active}})}{\Var(\thetahat_{\text{PAI}})} = \frac{1 + o(1)}{1 - \bbar^2}.
\end{align*}
The sufficient condition above admits an intuitive decomposition:
\begin{align*}
    \frac{n r_{\text{max}}}{U(R_0)} = \frac{n}{N} \cdot \frac{r_{\text{max}}}{\bar{r}}, \quad \text{where} \quad \bar{r} := \frac{1}{N} U(R_0) = \frac{1}{N} \sum_{i=1}^N \left| y_i - f(x_i) \right|.
\end{align*}
Since $r_{\text{max}} \geq \bar{r}$, this condition necessarily requires $n/N \to 0$, which is permitted by Theorem \ref{thm:pai-asymptotic-distribution}. More precisely, the labeling fraction must vanish sufficiently quickly relative to $r_{\text{max}} / \bar{r}$. In particular, $n/N \to 0$ suffices if $r_{\text{max}} / \bar{r} = O(1)$.

\subsection{Diagonal stochastic Newton updates for online logistic regression}
\label{appendix:olr-updates}

The following is adapted from Equations 3.1-3.3 of \cite{bercu2020efficient}, who maintain a dense $D \times D$ curvature matrix, whereas we use a diagonal approximation.

Let $\beta_t \in \R^D$ denote our logistic regression coefficients at round $t$, and let $h_t \in \R^D$ denote our accumulated diagonal curvature vector. For all datasets except for AlphaFold (which provides fixed probabilistic ML predictions without model access) and the two LLM evaluation suites MMLU-Pro and BBH+GPQA+IFEval+MATH+MusR, we initialize $\beta_0 = \beta_{\text{init}}$ by fitting on a training split of the data and initialize $h_0 = \bdone_D$. For the two LLM evaluation suites, we use the historical-data regimen described in \cite{wu2026efficient}: fitted question factors are held fixed, while $\beta_0$ and $h_0$ are initialized to the empirical mean and inverse marginal variances, respectively, of the historical LLMs' fitted model factors. These initializations are identical across all new LLMs.

After observing $(x_{I_t}, y_{I_t})$, we compute
\begin{align}
    h_t = h_{t-1} + \sigma\left( \beta_{t-1}^\top x_{I_t} \right) \left( 1 - \sigma\left( \beta_{t-1}^\top x_{I_t}\right) \right) x_{I_t}^{\odot 2},\\ \beta_t = \beta_{t-1} + (y_{I_t} - \sigma\left( \beta_{t-1}^\top x_{I_t} \right)) x_{I_t} \oslash h_t,
    \label{eqn:olr-updates}
\end{align}
where $\odot 2$ represents elementwise squaring, and $\oslash$ represents elementwise division.

\newpage

\section{Additional results}

\begin{figure}[!htbp]
    \centering
    \includegraphics[width=1.0\linewidth]{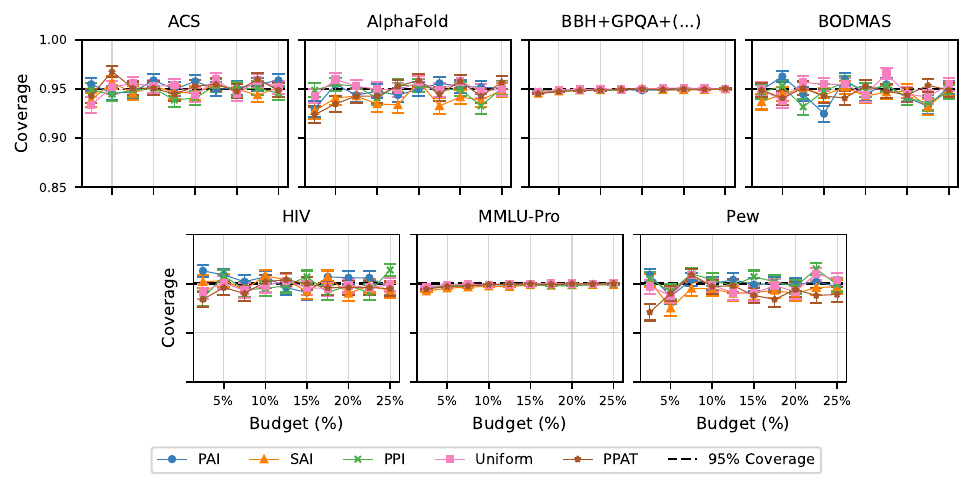}
    \caption{\textbf{Comparing empirical coverages of Pro-Active Inference (PAI), Sequential Active Inference (SAI), Prediction-Powered Inference (PPI), Uniform Sampling (Uniform), and Prediction-Powered Active Testing (PPAT) across seven datasets.} Each subplot shows the empirical coverage of nominal $95\%$ normal-approximation confidence intervals as a function of labeling budget. For all datasets except the two LLM evaluation suites (MMLU-Pro and BBH+GPQA+IFEval+MATH+MusR), empirical coverages are computed over $1000$ random seeds. For the two LLM evaluation suites, empirical coverages are averaged over $2.2$K test models per evaluation suite and $100$ seeds for computational efficiency. Standard errors are shown, but are negligible at the plotted scale.}
    \label{fig:all-coverages}
\end{figure}

\begin{figure}[!htbp]
    \centering
    \includegraphics[width=1.0\linewidth]{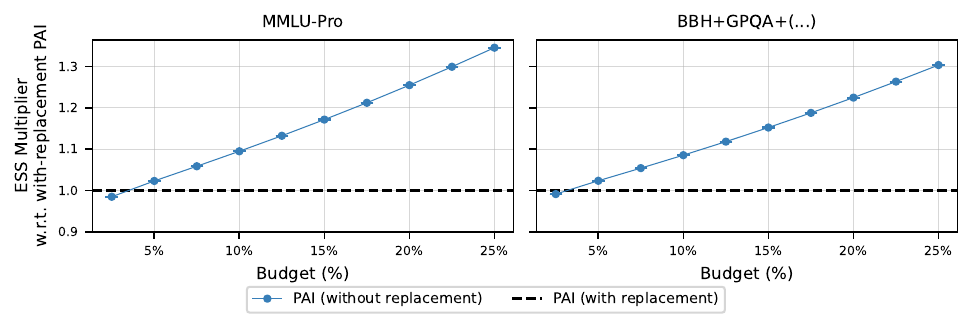}
    \caption{\textbf{ESS multipliers of without-replacement PAI over with-replacement PAI across the two LLM evaluation suites.} Each subplot shows the ESS multiplier of without-replacement PAI with respect to with-replacement PAI as a function of labeling budget. For example, on MMLU-Pro at a $22.5\%$ budget, an ESS multiplier of $1.3$ means that without-replacement PAI matches the confidence interval width of with-replacement PAI at a $29.25\%$ budget. Standard errors are shown, but are negligible at the plotted scale. \textbf{Higher ESS multipliers indicate better sampling efficiency, with values above $1$ indicating that without-replacement PAI outperforms with-replacement PAI.}}
    \label{fig:pai-with-vs-without-replacement}
\end{figure}

\begin{sidewaysfigure}
    \centering
    \includegraphics[width=1.0\linewidth]{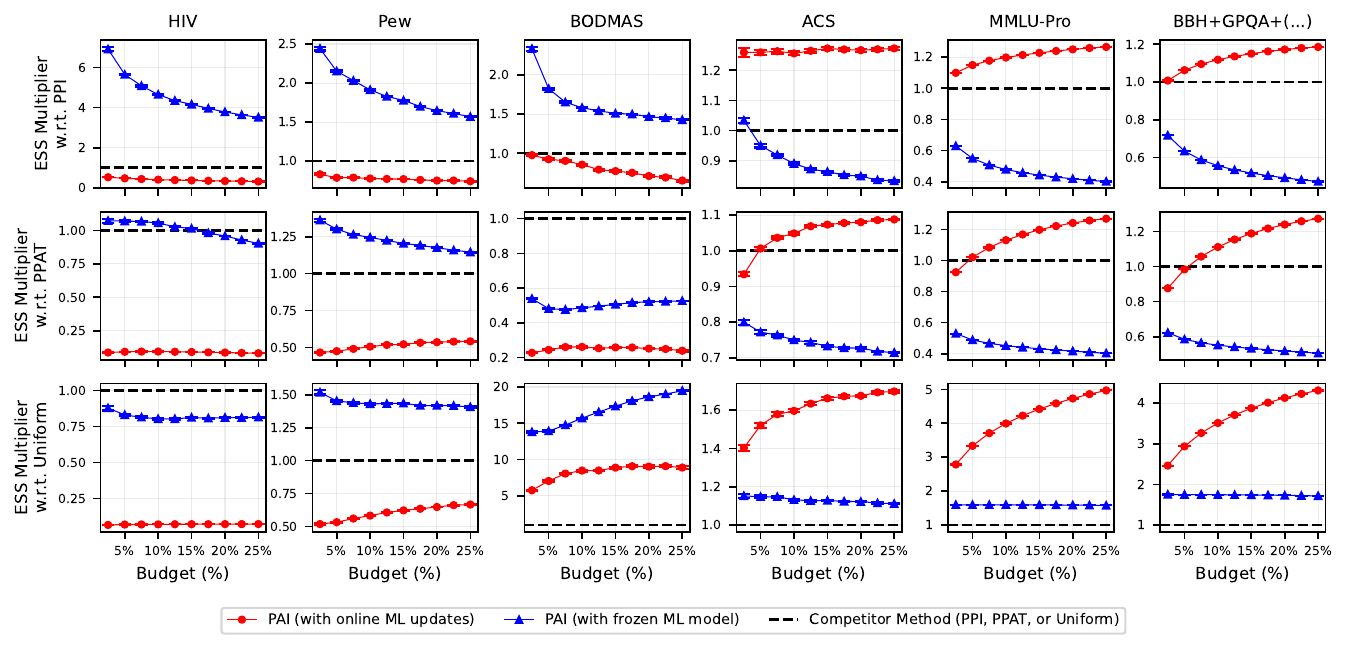}
    \caption{\textbf{ESS multipliers of PAI and its frozen-ML ablation over PPI, PPAT, and uniform sampling across six datasets.} Each row corresponds to one competitor (PPI, PPAT, or Uniform) and each column corresponds to one dataset. Each subplot shows the ESS multipliers of standard PAI (blue triangles) and frozen-ML PAI (red circles) with respect to the corresponding competitor method as a function of labeling budget. In the frozen-ML ablation, the initial ML model is held fixed throughout sampling, affecting both imputation and acquisition. The competing methods remain unchanged from their specifications in Section \ref{sec:experimental-setup}. \textbf{Higher ESS multipliers indicate better sampling efficiency, with values above $1$ indicating that the corresponding PAI variant outperforms the competitor.}}
    \label{fig:frozen-pai-ablation-all-datasets}
\end{sidewaysfigure}

\end{document}